\theoremstyle{definition}
  \newtheorem{defn}{\protect\definitionname}
\theoremstyle{plain}
\newtheorem{thm}{\protect\theoremname}
  \theoremstyle{plain}
  \theoremstyle{plain}
 \theoremstyle{definition}
  \providecommand{\definitionname}{Definition}
  \providecommand{\examplename}{Example}
  \providecommand{\propositionname}{Proposition}
\providecommand{\corollaryname}{Corollary}
\providecommand{\theoremname}{Theorem}
\begin{document}

\title{Entanglement measures for two-particle quantum histories}%

\author{Danko Georgiev}

\affiliation{Institute for Advanced Study, 30 Vasilaki Papadopulu Str., Varna 9010, Bulgaria}

\email{danko.georgiev@mail.bg}

\author{Eliahu Cohen}

\affiliation{Faculty of Engineering and the Institute of Nanotechnology and Advanced Materials, Bar Ilan University,
Ramat Gan 5290002, Israel}

\email{eliahu.cohen@biu.ac.il}

%\pacs{03.65.Ta, 03.65.Ca, 03.65.Ud}

\date{December 10, 2022}

\begin{abstract}
Quantum entanglement is a key resource, which grants quantum systems the ability to accomplish tasks that are classically impossible. Here, we apply Feynman's sum-over-histories formalism to interacting bipartite quantum systems and introduce entanglement measures for bipartite quantum histories. Based on the Schmidt decomposition of the matrix comprised of the Feynman propagator complex coefficients, we prove that bipartite quantum histories are entangled if and only if the Schmidt rank of this matrix is larger than 1. The proposed approach highlights the utility of using a separable basis for constructing the bipartite quantum histories and allows for quantification of their entanglement from the complete set of experimentally measured sequential weak values. We then illustrate the non-classical nature of entangled histories with the use of Hardy's overlapping interferometers and explain why local hidden variable theories are unable to correctly reproduce all observable quantum outcomes. Our theoretical results elucidate how the composite tensor product structure of multipartite quantum systems is naturally extended across time and clarify the difference between quantum histories viewed as projection operators in the history Hilbert space or viewed as chain operators and propagators in the standard Hilbert space.
\end{abstract}

\maketitle

\section{Introduction}

Quantum entanglement is a remarkable physical property of composite quantum systems,
first conceptualized by Erwin Schr\"{o}dinger in 1935,
such that it allows for the realization of physical situations in
which the best possible knowledge of the composite system does not
include the best possible knowledge of its constituent parts \cite{Schrodinger1935,Bengtsson2017,Brukner2021}.
Although the possibility of quantum entanglement was initially met with skepticism \cite{Einstein1935}, it is now well-confirmed by experiments \cite{Aspect1982,Aspect1999,Vedral2014} and is utilized routinely in cutting-edge quantum applications including quantum computing \cite{Nielsen2010,Bouwmeester2000},
quantum teleportation \cite{Bennett1993,Bouwmeester1997}, quantum
superdense coding \cite{Bennett1992}, quantum erasure \cite{Herzog1995},
quantum random number generation \cite{Jennewein2000} and quantum
cryptography \cite{Shenoy2017}. Because these quantum tasks cannot
be accomplished by classical systems, quantum entanglement is regarded
as a valuable physical resource that is worth producing and whose practical
value can be precisely quantified \cite{Horodecki2009,Wootters1998,Chitambar2019}.

For pure bipartite quantum states $|\Psi(t)\rangle_{AB}$ given at a single
time $t$ in a composite Hilbert space $\mathcal{H}=\mathcal{H}_A\otimes\mathcal{H}_B$, quantification of quantum entanglement is achieved
by the Schmidt decomposition of the state \cite{Schmidt1907a,Stewart1993,Ekert1995,Miszczak2011},
which provides a set of non-negative singular values $\lambda_{1}\geq\lambda_{2}\geq\ldots\geq\lambda_{s}\geq0$
referred to as Schmidt coefficients that can be then used for the
construction of various entanglement measures \cite{Vedral1997,Vedral1998,Horodecki2001,Plenio2007,Gudder2020b,Androulakis2020,Georgiev2022}.
In the presence of quantum dynamics, however, the quantum entanglement
in the system can change in time due to the quantum interaction Hamiltonian
that governs the physical interactions between the component subsystems
\cite{Georgiev2021b}. As a result, the individual quantum histories
for the component subsystems may entangle and it is not
immediately clear how this entanglement between the quantum histories
can be quantified. In fact, in many experimental setups it is customary
to prepare a separable initial quantum state, $|\Psi(0)\rangle_{AB}=|\Psi(0)\rangle_{A}\otimes|\Psi(0)\rangle_{B}$,
which is then dynamically evolved and eventually subjected to a final quantum measurement.
Because individual quantum histories may start or end with separable states, and the quantum entanglement may also dynamically evolve at intermediate times, it is desirable to have a general procedure that quantifies the overall entanglement of quantum histories.
Our work on the problem is motivated by a growing interest in the concept, including the recent theoretical development of the ``entangled histories'' formalism, which focuses on time factorization and allows for experimental tests of entangled quantum histories in time \cite{Cotler2016,Cotler2017b,Dong2017,Nowakowski2018,Pan2019}.
The entangled histories formalism was developed to enable reconstruction of the past evolution of a quantum system from measurements in the present \cite{Cotler2016,Cotler2017b}. The resulting quantum entangled histories exhibit non-classical features such as superposition of time evolutions and violation of Bell-type inequalities, indicating that the measurement outcomes produced by single quantum systems are not predetermined, but contextual on the entire set of commuting quantum observables that are simultaneously measured \cite{Dong2017,Nowakowski2018,Pan2019}. Entanglement in time alone, however, does not grant single quantum systems the capacity to produce nonlocal correlations between space-like separated regions. Instead, one must allow for quantum interactions between multiple subsystems and exploit the resulting spatial entanglement. In this work, we extend the entangled quantum histories formalism to multiple interacting quantum subsystems and provide quantitative entanglement measures for the bipartite case.

For quantification of entanglement in a complete set of bipartite quantum histories, which start from a preselected initial state and
end with a postselected final state, we have applied
Feynman's sum-over-histories formalism to describe the dynamics of the composite system.
Then, we have generalized the standard Schmidt decomposition
of quantum states at a single time to Schmidt decomposition of complete
sets of Feynman propagators for multi-time quantum histories and
used the resulting Schmidt coefficients for the calculation of entanglement
measures of these histories. A convenient feature
of this approach is that it avoids the use of entangled projectors
in the construction of the quantum history projection operators in
history Hilbert space and does not require the computation of inner
products between pairs of quantum history chain operators in standard Hilbert space.\\

The organization of the subsequent sections is as follows:
In Section~\ref{sec:2}, we introduce the main definitions for quantum
history projective operators in the history Hilbert space, with corresponding
chain operators and Feynman propagators in the standard
Hilbert space. Then, we elaborate on the construction of Feynman propagator
complex coefficient matrix from which can be determined the Schmidt coefficients
for the calculation of different entanglement measures including the
entanglement entropy, entanglement robustness and concurrence.
The possibility of experimental measurement of sequential weak values, instead of Feynman propagators, is also considered.
In Section~\ref{sec:3}, we describe Hardy's interferometer and use it as a test bed
in Section~\ref{sec:4} to highlight the differences between classical histories and quantum histories.
In Section~\ref{sec:5}, we provide a graphical illustration of entangled quantum histories and
explicitly compute their concurrence.
Lastly, in Section~\ref{sec:6},
we show how entangled quantum histories can be used for experimental
demonstration of quantum nonlocality for spacelike separated quantum
measurement outcomes. We conclude this work with a brief discussion
on the utility of the quantum history formalism for the description of optical applications.
Prospects for future research are also presented.

\section{\label{sec:2}Entanglement of quantum histories}

For the analysis of entangled quantum histories, we will need to introduce
three types of related, but mathematically distinct objects: quantum history projection operators in history Hilbert space, as well as
quantum history chain operators and corresponding Feynman propagators in the standard Hilbert space.

\begin{defn}
(Quantum history projection operators) Given a quantum system residing in $n$-dimensional Hilbert space $\mathcal{H}$, quantum histories are constructed with the use of $k$ temporal resolutions of the identity operator $\hat{I}$ given
by complete sets of projection operators

$\{\mathcal{\hat{P}}_{1}^{(t_{1})},\hat{\mathcal{P}}_{2}^{(t_{1})},\ldots,\mathcal{\hat{P}}_{n}^{(t_{1})}\}$,
$\{\mathcal{\hat{P}}_{1}^{(t_{2})},\hat{\mathcal{P}}_{2}^{(t_{2})},\ldots,\mathcal{\hat{P}}_{n}^{(t_{2})}\}$,
$\ldots$, $\{\mathcal{\hat{P}}_{1}^{(t_{k})},\hat{\mathcal{P}}_{2}^{(t_{k})},\ldots,\mathcal{\hat{P}}_{n}^{(t_{k})}\}$
that span the Hilbert space of the system at each time point $t_{1},t_{2},\ldots,t_{k}$.
The superscripts denoting different times are written in brackets in order to remind that these are not powers and to indicate that projectors with the same subscript but with different superscript do not have to be equal, namely, the resolution of identity can be performed in different bases at different time points.
Next, we associate $k$ distinct indices each enumerating the $n$ projectors at the corresponding time point, namely,
$i_{1}\in\{1,2,\ldots,n\},i_{2}\in\{1,2,\ldots,n\},\ldots,i_{k}\in\{1,2,\ldots,n\}$, so that we can handle simultaneously $k$~sums:
$\sum_{i_{1}=1}^{n}\mathcal{\hat{P}}_{i_{1}}^{(t_{1})}=\hat{I}_{1}$, 
$\sum_{i_{2}=1}^{n}\mathcal{\hat{P}}_{i_{2}}^{(t_{2})}=\hat{I}_{2}$,
$\ldots$, $\sum_{i_{k}=1}^{n}\mathcal{\hat{P}}_{i_{k}}^{(t_{k})}=\hat{I}_{k}$.
If we employ the symbol $\odot$ to denote tensor products at different times,
we can write each \emph{quantum history} as a one-dimensional projection
operator 
\begin{equation}
\hat{\mathcal{Q}}_{\alpha}
=\hat{\mathcal{Q}}_{i_{1},i_{2},\ldots,i_{k}}
=\mathcal{\hat{P}}_{i_{k}}^{(t_{k})}\odot\ldots\odot\mathcal{\hat{P}}_{i_{2}}^{(t_{2})}\odot\mathcal{\hat{P}}_{i_{1}}^{(t_{1})}
\end{equation}
 in \emph{history Hilbert space} $\breve{\mathcal{H}}=\mathcal{H}_{k}\odot\ldots\odot\mathcal{H}_{2}\odot\mathcal{H}_{1}$,
where $\mathcal{H}_{k}$ is a copy of the standard Hilbert space of
the physical system at time $t_{k}$, and the single index $\alpha\in\{1,2,\ldots,n^k\}$ is in one-to-one correspondence with each instantiation of the multi-index $\{i_1,i_2,\ldots,i_k\}$ using the explicit mapping
\begin{equation}
\alpha=1+(i_{1}-1)n^{0}+(i_{2}-1)n^{1}+\ldots+(i_{k}-1)n^{k-1}.
\end{equation}
The introduction of the single index $\alpha$ will be very useful in subsequent derivations because we can efficiently perform Feynman summation using a single sum rather than multiple $k$~sums. The result after summing over all $n^{k}$ orthogonal
quantum histories that span the history Hilbert space $\breve{\mathcal{H}}$
is the identity history \cite{Georgiev2018}

\begin{align}
\sum_{\alpha=1}^{n^{k}}\hat{\mathcal{Q}}_{\alpha}
&=\sum_{i_{1}=1}^{n}\sum_{i_{2}=1}^{n}\cdots\sum_{i_{k}=1}^{n}\hat{\mathcal{Q}}_{i_{1},i_{2},\ldots,i_{k}} \nonumber \\
&=\hat{I}_{k}\odot\ldots\odot\hat{I}_{2}\odot\hat{I}_{1}
.
\end{align}
\end{defn}

\begin{defn}
(Quantum histories with preselected and postselected quantum states)
The introduction of a complete set of quantum histories from a preselected
initial state $|\psi_{i}\rangle$ at~$t_{0}$ to a postselected final
state $|\psi_{f}\rangle$ at~$t_{k+1}$ is performed with the insertion
of initial projector $\mathcal{\hat{P}}_{0}=|\psi_{i}\rangle\langle\psi_{i}|$
and final projector $\mathcal{\hat{P}}_{k+1}=|\psi_{f}\rangle\langle\psi_{f}|$

\begin{equation}
\hat{\mathcal{Q}}_{\alpha}=\mathcal{\hat{P}}_{k+1}^{(t_{k}+1)}\odot\mathcal{\hat{P}}_{i_{k}}^{(t_{k})}\odot\ldots\odot\mathcal{\hat{P}}_{i_{2}}^{(t_{2})}\odot\mathcal{\hat{P}}_{i_{1}}^{(t_{1})}\odot\mathcal{\hat{P}}_{0}^{(t_{0})}
\end{equation}
such that the sum over all histories gives
\begin{equation}
\sum_{\alpha=1}^{n^{k}}\hat{\mathcal{Q}}_{\alpha}=\mathcal{\hat{P}}_{k+1}\odot\hat{I}_{k}\odot\ldots\odot\hat{I}_{2}\odot\hat{I}_{1}\odot\mathcal{\hat{P}}_{0}
,
\end{equation}
where we have dropped redundant time superscripts.
It is noteworthy that the superscript-subscript notation stacks information about multi-index instantiation vertically rather than horizontally in order to keep short the expressions for quantum histories, e.g.,
$\mathcal{\hat{P}}_{3}^{(t_{2})}\equiv\mathcal{\hat{P}}_{i_2=3}^{(t_{2})}\equiv\mathcal{\hat{P}}_{i_{2}=3}$.
When the multi-index is not instantiated, however, we can use the one-to-one correspondence between the indices and the time points, $i_{1}\leftrightarrow t_{1},i_{2}\leftrightarrow t_{2},\ldots,i_{k}\leftrightarrow t_{k}$, in order to drop the superscript, e.g. $\mathcal{\hat{P}}_{i_{2}}^{(t_{2})}\equiv\mathcal{\hat{P}}_{i_{2}}$.

\end{defn}

\begin{defn}
(Quantum history chain operators)
To each quantum history
$\hat{\mathcal{Q}}_{\alpha}=\mathcal{\hat{P}}_{k+1}\odot\mathcal{\hat{P}}_{i_{k}}\odot\ldots\odot\mathcal{\hat{P}}_{i_{2}}\odot\mathcal{\hat{P}}_{i_{1}}\odot\mathcal{\hat{P}}_{0}$
in history Hilbert space~$\breve{\mathcal{H}}$, there is a corresponding
chain operator $\hat{K}_{\alpha}$ in standard Hilbert space~$\mathcal{H}$
obtained from replacement of the time tensor symbols~$\odot$ with
the corresponding time evolution operators~$\hat{\mathcal{T}}$ as follows
\begin{equation}
\hat{K}_{\alpha}=\mathcal{\hat{P}}_{k+1}\hat{\mathcal{T}}(t_{k+1},t_{k})\ldots\mathcal{\hat{P}}_{i_{2}}\hat{\mathcal{T}}(t_{2},t_{1})\mathcal{\hat{P}}_{i_{1}}\hat{\mathcal{T}}(t_{1},t_{0})\mathcal{\hat{P}}_{0}
.
\end{equation}

Because the time evolution operators $\hat{\mathcal{T}}$ depend on
the concrete quantum Hamiltonian $\hat{H}$ of the dynamic quantum system, it
follows that the chain operators cannot be in general reused for different
experimental setups described with different quantum Hamiltonians.
In contrast, the same sets of quantum history projection operators
$\{\hat{\mathcal{Q}}_{\alpha}\}_{\alpha=1}^{n^{k}}$
in history Hilbert space~$\breve{\mathcal{H}}$ can be applied to different experimental setups
with different quantum Hamiltonians as long as the $n$-dimensional
Hilbert space of the system at each time point is identical.
\end{defn}

\begin{defn}
(Quantum history propagators) The Feynman propagator
$\psi_{\alpha}$
describing the flow of quantum probability amplitudes along a particular quantum history
$\hat{\mathcal{Q}_{\alpha}}$
in a physical system, which is preselected
in an initial quantum state $|\psi_{i}\rangle$ at $t_{0}$ and postselected
in a final quantum state $|\psi_{f}\rangle$ at $t_{k+1}$, is given
by the inner product involving the history chain operator \cite{Georgiev2018}

\begin{equation}
\psi_{\alpha}=\langle\psi_{f}|\hat{K}_{\alpha}|\psi_{i}\rangle
.
\end{equation}

\end{defn}

\begin{defn}
(Quantum histories of bipartite systems)
For bipartite quantum systems,
we construct a complete set of separable quantum histories in composite history Hilbert space
$\breve{\mathcal{H}}_{A\otimes B}$ using standard tensor products
$\otimes$ of local projection operators for each of the two systems $A$ and $B$ at each time point
\begin{equation}
\hat{\mathcal{Q}}=\left(\mathcal{\hat{P}}_{A_{k}}\otimes\mathcal{\hat{P}}_{B_{k}}\right)\odot\ldots\odot\left(\mathcal{\hat{P}}_{A_{2}}\otimes\mathcal{\hat{P}}_{B_{2}}\right)\odot\left(\mathcal{\hat{P}}_{A_{1}}\otimes\mathcal{\hat{P}}_{B_{1}}\right)
.
\end{equation}
Rearranging the tensor products allows for equivalent formulation in terms of the quantum histories for each component subsystem. Thus, alternatively we can also write
\begin{eqnarray}
\hat{\mathcal{Q}} & =& \left(\mathcal{\hat{P}}_{A_{k}}\odot\mathcal{\hat{P}}_{A_{2}}\odot\ldots\odot\mathcal{\hat{P}}_{A_{1}}\right)\nonumber \\
&& ~ \otimes\left(\mathcal{\hat{P}}_{B_{k}}\odot\mathcal{\hat{P}}_{B_{2}}\odot\ldots\odot\mathcal{\hat{P}}_{B_{1}}\right)\nonumber \\
 & = &\hat{\mathcal{Q}}_{A}\otimes\hat{\mathcal{Q}}_{B}
.
\end{eqnarray}
\end{defn}

\begin{defn}
(Schmidt decomposition based on bipartite Feynman propagators) The history
Hilbert space $\breve{\mathcal{H}}_{A}$ for the system $A$ and the
history Hilbert space $\breve{\mathcal{H}}_{B}$ for the system $B$
will each have $n^{k}$ orthogonal local quantum histories, resulting in
$n^{2k}$ orthogonal separable quantum histories for the composite history Hilbert
space $\breve{\mathcal{H}}_{A\otimes B}$. The rearrangement of the
tensor product of the separable bipartite quantum history into $\hat{\mathcal{Q}}_{A}\otimes\hat{\mathcal{Q}}_{B}$
allows for the introduction of a single index $\alpha\in\left\{ 1,\ldots n^{k}\right\} $
instead of the multiple index $\{A_{1},A_{2},\ldots,A_{k}\}$ and
single index $\beta\in\left\{ 1,\ldots n^{k}\right\} $ instead of
the multiple index $\{B_{1},B_{2},\ldots,B_{k}\}$. Consequently,
there will be $n^{2k}$ orthogonal composite quantum histories $\hat{\mathcal{Q}}_{\alpha,\beta}=\hat{\mathcal{Q}}_{\alpha}\otimes\hat{\mathcal{Q}}_{\beta}$
for which can be computed the bipartite Feynman propagators $\psi_{\alpha,\beta}=\langle\psi_{f}|\hat{K}_{\alpha,\beta}|\psi_{i}\rangle$.
This allows the construction of propagator complex coefficient matrix
\begin{equation}
\hat{C}=\left(\psi_{\alpha,\beta}\right)=\left(\begin{array}{cccc}
\psi_{1,1} & \psi_{1,2} & \ldots & \psi_{1,n^{k}}\\
\psi_{2,1} & \psi_{2,2} & \ldots & \psi_{2,n^{k}}\\
\vdots & \vdots & \ddots & \vdots\\
\psi_{n^{k},1} & \psi_{n^{k},2} & \ldots & \psi_{n^{k},n^{k}}
\end{array}\right)
\end{equation}
which can undergo singular value decomposition, $\hat{C}=\hat{U}\hat{\Lambda}\hat{V}^{\dagger}$
such that $\hat{U}$ and $\hat{V}^{\dagger}$ are unitary matrices,
and $\hat{\Lambda}$ is a diagonal matrix with non-negative singular
values sorted in descending order $\lambda_{1}\geq\lambda_{2}\geq\ldots\lambda_{s}\geq0$.

If the final postselected state occurs with unit probability, the singular values are referred to as Schmidt coefficients \cite{Georgiev2022}.
The squared Schmidt coefficients sum to unity, $\sum_i \lambda_i^2 =1$, and can be obtained as eigenvalues
of the following positive semidefinite Hermitian matrices
\begin{eqnarray}
\hat{C}\hat{C}^{\dagger} &=& \hat{U}\hat{\Lambda}\hat{V}^{\dagger}\hat{V}\hat{\Lambda}\hat{U}^{\dagger}=\hat{U}\hat{\Lambda}^{2}\hat{U}^{\dagger} ,\\
\hat{C}^{\dagger}\hat{C} &=& \hat{V}\hat{\Lambda}\hat{U}^{\dagger}\hat{U}\hat{\Lambda}\hat{V}^{\dagger}=\hat{V}\hat{\Lambda}^{2}\hat{V}^{\dagger} .
\end{eqnarray}

For the analysis of postselected states that occur with less than unit probability, the propagator complex coefficient matrix can be normalized using the square root of $\textrm{Tr}(\hat{C}\hat{C}^{\dagger})$, or more conveniently one can determine the squared Schmidt coefficients from
\begin{equation}
\tilde{C}=\frac{\hat{C}\hat{C}^{\dagger}}{\textrm{Tr}\left(\hat{C}\hat{C}^{\dagger}\right)}
.
\end{equation}
Due to the construction of the composite history Hilbert space $\breve{\mathcal{H}}_{A\otimes B}$ using mutually
orthogonal quantum history projection operators
\begin{equation}
\textrm{Tr}\left[\left(\hat{\mathcal{Q}}_{\alpha}\otimes\hat{\mathcal{Q}}_{\beta}\right)\cdot\left(\hat{\mathcal{Q}}_{\alpha^{\prime}}\otimes\hat{\mathcal{Q}}_{\beta^{\prime}}\right)\right]=\delta_{\alpha,\alpha^{\prime}}\delta_{\beta,\beta^{\prime}}
,
\end{equation}
the Feynman propagators $\psi_{\alpha,\beta}$ for each quantum history
give directly the quantum probabilities $\left|\psi_{\alpha,\beta}\right|^{2}$
for observing the corresponding quantum histories with sequential
strong measurements. This allows for simple expression of
the composite quantum history state vector $|\Psi)$ in $\breve{\mathcal{H}}_{A\otimes B}$ using the Feynman propagators as
\begin{equation}
|\Psi)=\sum_{\alpha}\sum_{\beta}\psi_{\alpha,\beta}|r_{\alpha})\otimes|r_{\beta}^{\prime})
\label{eq:canonical}
\end{equation}
where $|r_{\alpha})$ and $|r_{\beta}^{\prime})$ are the rays corresponding
to the local quantum history projectors $\hat{\mathcal{Q}}_{\alpha}=|r_{\alpha})(r_{\alpha}|$
and $\hat{\mathcal{Q}}_{\beta}=|r_{\beta}^{\prime})(r_{\beta}^{\prime}|$,
respectively for subsystems $A$ and $B$.

Introducing the matrix reshaping operation (row by row) into a column vector \cite{Miszczak2011}
%\begin{widetext}
\begin{equation}
\textrm{res}\left(\hat{X}\right)=\left(x_{1,1},x_{1,2},\ldots,x_{1,n},
%x_{2,1},x_{2,2},\ldots,x_{2,n},
\ldots,x_{n,1},x_{n,2},\ldots,x_{n,n}\right)^{T}
,
\end{equation}
%\end{widetext}
it follows that the composite quantum history state vector $|\Psi)$
is obtained from the reshaped Feynman propagator complex coefficient
matrix $|\Psi)=\textrm{res}~\!(\hat{C})$.
Using the following matrix reshaping property \cite{Miszczak2011,Roth1934}
\begin{equation}
\textrm{res}\left(\hat{X}\hat{Y}\hat{Z}\right)=\left(\hat{X}\otimes\hat{Z}^{T}\right)\textrm{res}\left(\hat{Y}\right)
\end{equation}
applied to \eqref{eq:canonical}, we obtain
\begin{equation}
|\Psi) =\textrm{res}\left(\hat{U}\hat{\Lambda}\hat{V}^{\dagger}\right)=\left(\hat{U}\otimes\hat{V}^{*}\right)\textrm{res}\left(\hat{\Lambda}\right)
.
\end{equation}
Since $\hat{U}$ and $\hat{V}^{*}$ are local unitary operators, the action of their tensor product preserves the basis separability during the
change of the original basis $|r_{\alpha})$, $|r_{\beta}^{\prime})$ into Schmidt basis
$|\tilde{\alpha}_{s})=\hat{U}|r_{s})$, $|\tilde{\beta}_{s})=\hat{V}^{*}|r_{s}^{\prime})$ as follows
\begin{align}
|\Psi) & =\left(\hat{U}\otimes\hat{V}^{*}\right)\sum_{\alpha}\sum_{\beta}\lambda_{\alpha,\beta}\delta_{\alpha,\beta}|r_{\alpha})\otimes|r_{\beta}^{\prime})\nonumber \\
& =\sum_{s}\lambda_{s}\left[\hat{U}|r_{s})\right]\otimes\left[\hat{V}^{*}|r_{s}^{\prime})\right] \nonumber \\
& =\sum_{s}\lambda_{s}|\tilde{\alpha}_{s})\otimes|\tilde{\beta}_{s})
.
\label{eq:Schmidt}
\end{align}
\end{defn}

\begin{defn}
(Schmidt rank) The~Schmidt rank of a particular Schmidt decomposition is the number of nonzero Schmidt coefficients \cite{Miszczak2011}.
\end{defn}

\begin{thm}
(Entanglement of quantum histories)
The~Schmidt rank provides a discrete (binary) criterion for entanglement, namely, the quantum histories are entangled if and only if the Schmidt rank of the propagator complex coefficient matrix $\hat{C}$ is larger than 1.
\end{thm}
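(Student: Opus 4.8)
The plan is to reduce the statement to the textbook fact that a bipartite pure state is a product state precisely when its matrix of expansion coefficients has matrix rank one, exploiting the identification $|\Psi)=\textrm{res}(\hat{C})$ together with the Schmidt decomposition \eqref{eq:Schmidt} that has already been derived. Throughout, ``entangled'' is understood as ``not separable'', i.e., the history vector $|\Psi)\in\breve{\mathcal{H}}_{A\otimes B}$ admits no representation $|\Psi)=|\phi_{A})\otimes|\phi_{B})$ with $|\phi_{A})\in\breve{\mathcal{H}}_{A}$ and $|\phi_{B})\in\breve{\mathcal{H}}_{B}$ (entanglement across the spatial $A\mid B$ cut of the history, not across the temporal $\odot$-factors). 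I would first record that whenever the postselection occurs with nonzero probability one has $\hat{C}\neq 0$, hence its Schmidt rank is at least $1$, so the alternative ``rank $=1$'' versus ``rank $>1$'' is genuinely exhaustive.

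For the implication ``Schmidt rank $=1$ $\Rightarrow$ not entangled'', I would simply set $s=1$ in \eqref{eq:Schmidt}, giving $|\Psi)=\lambda_{1}\,|\tilde{\alpha}_{1})\otimes|\tilde{\beta}_{1})$, which is manifestly of product form. For the converse, suppose $|\Psi)=|\phi_{A})\otimes|\phi_{B})$ and let $a,b$ be the coordinate column vectors of $|\phi_{A}),|\phi_{B})$ in the local history bases $\{|r_{\alpha})\}$, $\{|r_{\beta}')\}$. Comparing with \eqref{eq:canonical} gives $\psi_{\alpha,\beta}=a_{\alpha}b_{\beta}$, i.e.\ $\hat{C}=a\,b^{T}$; equivalently, applying the reshaping property $\textrm{res}(\hat{X}\hat{Y}\hat{Z})=(\hat{X}\otimes\hat{Z}^{T})\textrm{res}(\hat{Y})$ with $\hat{X}=a$, $\hat{Y}=[1]$, $\hat{Z}=b^{T}$ yields $\textrm{res}(\hat{C})=a\otimes b=|\Psi)$. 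Such an outer product has matrix rank $1$, and since the number of nonzero singular values of a matrix equals its rank, the Schmidt rank of $\hat{C}$ is $1$. Chaining the two directions (together with the fact that the Schmidt rank is exactly the rank of $\hat{\Lambda}$, hence of $\hat{C}=\hat{U}\hat{\Lambda}\hat{V}^{\dagger}$ since $\hat{U},\hat{V}^{\dagger}$ are invertible) gives the biconditional: histories entangled $\Leftrightarrow$ $|\Psi)$ is not a product vector $\Leftrightarrow$ $\hat{C}$ is not rank one $\Leftrightarrow$ Schmidt rank $>1$.

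I do not anticipate a real obstacle, since the substantive work was done in establishing \eqref{eq:Schmidt}; the proof is essentially a corollary. The one point deserving care is well-definedness of ``the Schmidt rank of $\hat{C}$'': it must not depend on the chosen singular value decomposition, which holds because the multiset $\{\lambda_{s}\}$ is the spectrum of $\hat{C}\hat{C}^{\dagger}$ and is therefore uniquely determined by $\hat{C}$. I would also note that when the postselection probability is less than one, passing to the normalized matrix $\tilde{C}=\hat{C}\hat{C}^{\dagger}/\textrm{Tr}(\hat{C}\hat{C}^{\dagger})$ rescales every $\lambda_{s}$ by a common positive factor and hence leaves the Schmidt rank, and thus the entanglement verdict, unchanged.
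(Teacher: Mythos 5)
Your proposal is correct, and its overall strategy coincides with the paper's: both directions are read off from the Schmidt decomposition \eqref{eq:Schmidt} of the history vector $|\Psi)=\textrm{res}\,(\hat{C})$, and the forward implication (Schmidt rank $1$ $\Rightarrow$ product form, hence not entangled) is argued identically. The only substantive difference lies in the converse. The paper argues that if $|\Psi)$ is separable one may pass to a ``separable basis'' in which the coefficient matrix is diagonal with a single nonzero entry, and then invokes uniqueness of the Schmidt coefficients under the change of basis; this is somewhat loosely worded, since $\hat{C}$ itself is defined in the original basis and what is really used is invariance of the singular values under local unitaries. You instead stay in the original basis and observe directly that separability $|\Psi)=|\phi_{A})\otimes|\phi_{B})$ forces $\psi_{\alpha,\beta}=a_{\alpha}b_{\beta}$, i.e.\ $\hat{C}=a\,b^{T}$ is an outer product of matrix rank one, and that the Schmidt rank equals the number of nonzero singular values, hence equals the matrix rank. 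This is the same underlying fact (product vector $\Leftrightarrow$ rank-one coefficient matrix) but your route is more explicit and avoids the basis-change step entirely; your added remarks on well-definedness of the Schmidt rank (uniqueness of the spectrum of $\hat{C}\hat{C}^{\dagger}$), on $\hat{C}\neq 0$ making the rank-$1$ versus rank-$>1$ alternative exhaustive, and on invariance of the rank under the normalization $\tilde{C}=\hat{C}\hat{C}^{\dagger}/\textrm{Tr}\,(\hat{C}\hat{C}^{\dagger})$ are small but genuine tightenings that the paper leaves implicit.
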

\begin{proof}
If the Schmidt rank of $|\Psi)$ is 1, then its Schmidt decomposition \eqref{eq:Schmidt} contains only a single non-zero Schmidt
coefficient $\lambda_{1}=1$, $\lambda_{2}=\lambda_{3}=\ldots=\lambda_{s}=0$.
Thus, the composite quantum history state vector $|\Psi)$ is separable,
$|\Psi)=1\,|\tilde{\alpha}_{1})\otimes|\tilde{\beta}_{1})$, hence
not entangled.

Conversely, if $|\Psi)$ is separable, then its Schmidt rank is~1.
Indeed, from the definition of separability, it follows that $|\Psi)$
can be written as $|\tilde{\alpha})\otimes|\tilde{\beta})$ for some
separable basis. In this basis, the matrix $\hat{C}$ is already diagonal and
one can observe that there is only a single non-zero Schmidt coefficient
$\lambda_{1}=1$. Because the Schmidt coefficients are unique, the change of basis for expressing the state $|\Psi)$
will not increase the number of non-zero Schmidt coefficients. Therefore, if the Schmidt rank is greater than 1, then $|\Psi)$
is necessarily entangled.
\end{proof}

Weak measurement is a fruitful technique, which can extract information about the properties of a quantum state without collapsing the state into eigenvectors of the measured operator \cite{Aharonov1988,Aharonov2014,Tamir2013}. This is done by creating a weak coupling between the dynamically evolving quantum system and a measurement device whose pointer is usually prepared with an initial Gaussian wave function centered at zero
\begin{equation}
\phi(x)=\left(2\pi\sigma^{2}\right)^{-\frac{1}{4}} e^{-\frac{x^{2}}{4\sigma^{2}}}
\end{equation}
If the weak interaction Hamiltonian between the measured system and the measuring device is
\begin{equation}
\hat{H}_{\textrm{int}}=g\delta(t-t_1)\,\hat{\mathcal{P}}_{i_1}\otimes\hat{p}
\end{equation}
where $g$ is the weak coupling parameter, $\hat{\mathcal{P}}_{i_1}$ is a projection operator (observable) for the measured system and
$\hat{p}=\hbar\hat{k}$ is
the pointer momentum conjugate to the position $\hat{x}$, it is possible to extract the weak value~\cite{Georgiev2018}
\begin{equation}
\left(\mathcal{\hat{P}}_{i_{1}}\right)_{w}=\frac{\langle\psi_{f}|\hat{\mathcal{T}}(t_{2},t_{1})\mathcal{\hat{P}}_{i_{1}}\hat{\mathcal{T}}(t_{1},t_{0})|\psi_{i}\rangle}{\langle\psi_{f}|\hat{\mathcal{T}}(t_{2},t_{1})\hat{\mathcal{T}}(t_{1},t_{0})|\psi_{i}\rangle}
\end{equation}
using strong projective measurement of the position $\hat{x}$ or the wave number $\hat{k}$ of the meter pointer after the quantum system of interest is postselected in the desired final state $\vert \psi_f\rangle$.
The expectation (average) values of the latter quantities approximate to first order in $g$ the real and imaginary parts of the weak value,
$\textrm{Re}[(\mathcal{\hat{P}}_{i_{1}})_{w}] \approx  g^{-1} \langle \hat{x}\rangle $
and
$\textrm{Im}[(\mathcal{\hat{P}}_{i_{1}})_{w}] \approx  2\sigma^2 g^{-1} \langle \hat{k}\rangle $
\cite{Jozsa2007,Dressel2014,Mitchison2007}.

Sequential weak values generalize the notion of weak value to a sequence of multi-time observables. For quantum histories, we consider the special case when the observables are multi-time projection operators
\begin{equation}
\left(Q_{\alpha^{\prime}}\right)_{w}=\left(\mathcal{\hat{P}}_{i_{k}},\ldots,\mathcal{\hat{P}}_{i_{2}},\mathcal{\hat{P}}_{i_{1}}\right)_{w}=\frac{\psi_{\alpha^{\prime}}}{\sum_{\alpha}\psi_{\alpha}}
.
\end{equation}
If each observable is weakly coupled to its own measuring device with a Gaussian pointer, then sequential weak values can be extracted by measuring the expectation values of various product combinations of pointer positions and momenta as described elsewhere \cite{Mitchison2007}.
For our purposes, it is sufficient to note that sequential weak values are given by Feynman propagator ratios and can be experimentally determined using weak measurements \cite{Piacentini2016,Kim2018}.

\begin{defn}
(Sequential weak value of bipartite quantum histories)
The sequential weak value $(Q_{\alpha^{\prime},\beta^{\prime}})_w$ of the projection operators generating the particular bipartite quantum history $\hat{\mathcal{Q}}_{\alpha^{\prime},\beta^{\prime}}$ is given by the ratio of the Feynman propagator $\psi_{\alpha^{\prime},\beta^{\prime}}$ and the sum over all quantum histories that start from the same initial state and end with the same final state \cite{Georgiev2018}
\begin{equation}
\left(Q_{\alpha^{\prime},\beta^{\prime}}\right)_{w}=\frac{\psi_{\alpha^{\prime},\beta^{\prime}}}{\sum_{\alpha,\beta}\psi_{\alpha,\beta}}
\end{equation}
If the histories contain only a single intermediate time point, the sequential weak values reduce to weak values.
\end{defn}

\begin{thm}
(Schmidt decomposition based on sequential weak values)
The Schmidt coefficients for quantification of entanglement of quantum histories can be determined from the complex sequential weak value matrix after normalization
\begin{equation}
\hat{M}=\left(Q_{\alpha,\beta}\right)_{w}=\left(\begin{array}{cccc}
\left(Q_{1,1}\right)_{w} & \left(Q_{1,2}\right)_{w} & \ldots & \left(Q_{1,n^{k}}\right)_{w}\\
\left(Q_{2,1}\right)_{w} & \left(Q_{2,2}\right)_{w} & \ldots & \left(Q_{2,n^{k}}\right)_{w}\\
\vdots & \vdots & \ddots & \vdots\\
\left(Q_{n^{k},1}\right)_{w} & \left(Q_{n^{k},2}\right)_{w} & \ldots & \left(Q_{n^{k},n^{k}}\right)_{w}
\end{array}\right)
\end{equation}
\end{thm}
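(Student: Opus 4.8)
\emph{Proof proposal.} The plan is to observe that the sequential weak value matrix $\hat{M}$ is nothing but the propagator matrix $\hat{C}$ of the preceding theorem rescaled by a single nonzero complex number, so that its normalized singular data coincide exactly with those of $\hat{C}$, which we have already shown encode the Schmidt coefficients and decide entanglement of the histories.

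First I would set $Z=\sum_{\alpha,\beta}\psi_{\alpha,\beta}$ and note that entrywise $(Q_{\alpha,\beta})_{w}=\psi_{\alpha,\beta}/Z$, hence $\hat{M}=\hat{C}/Z$. I would then record that $Z=\langle\psi_{f}|\sum_{\alpha,\beta}\hat{K}_{\alpha,\beta}|\psi_{i}\rangle=\langle\psi_{f}|\hat{\mathcal{T}}(t_{k+1},t_{0})|\psi_{i}\rangle$, since summing the chain operators over all histories telescopes the intermediate resolutions of identity into the full propagator; thus $Z\neq 0$ precisely when the postselection occurs with nonzero probability, which is the regime in which the whole construction is defined. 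Writing $Z=|Z|e^{i\phi}$ and starting from the singular value decomposition $\hat{C}=\hat{U}\hat{\Lambda}\hat{V}^{\dagger}$, I obtain $\hat{M}=\hat{U}\,(|Z|^{-1}\hat{\Lambda})\,(e^{i\phi}\hat{V})^{\dagger}$, which is a valid singular value decomposition of $\hat{M}$ because $e^{i\phi}\hat{V}$ is still unitary. Consequently $\hat{M}$ has the same left singular vectors as $\hat{C}$, right singular vectors differing only by an irrelevant global phase, and singular values $\lambda_{s}/|Z|$.

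Next I would carry this through the normalization step. From $\hat{M}\hat{M}^{\dagger}=|Z|^{-2}\hat{C}\hat{C}^{\dagger}$ and $\textrm{Tr}(\hat{M}\hat{M}^{\dagger})=|Z|^{-2}\textrm{Tr}(\hat{C}\hat{C}^{\dagger})$ it follows that the normalized matrices coincide, $\tilde{M}=\hat{M}\hat{M}^{\dagger}/\textrm{Tr}(\hat{M}\hat{M}^{\dagger})=\hat{C}\hat{C}^{\dagger}/\textrm{Tr}(\hat{C}\hat{C}^{\dagger})=\tilde{C}$. Therefore the eigenvalues of $\tilde{M}$ are exactly the squared Schmidt coefficients $\lambda_{s}^{2}$ with $\sum_{s}\lambda_{s}^{2}=1$ previously extracted from $\tilde{C}$, and up to a global phase the Schmidt vectors $|\tilde{\alpha}_{s})$, $|\tilde{\beta}_{s})$ obtained from $\hat{M}$ are the same as in Eq.~\eqref{eq:Schmidt}. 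Combining this with the previous theorem, the histories are entangled if and only if the Schmidt rank of $\hat{M}$ exceeds $1$, so the weak value matrix carries the same entanglement information as $\hat{C}$.

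The only point requiring any care — and hence the ``main obstacle'', albeit a mild one — is the emphasis on normalization: $\hat{M}$ itself has Frobenius norm $\|\hat{C}\|/|Z|$, generically not equal to $1$, so one must not read the Schmidt coefficients directly off the singular values of $\hat{M}$ but instead rescale by $\sqrt{\textrm{Tr}(\hat{M}\hat{M}^{\dagger})}$ (equivalently pass to $\tilde{M}$), exactly as in the less-than-unit-probability postselection case treated for $\hat{C}$. I would close by remarking that this is precisely what makes the result experimentally useful: the sequential weak values $(Q_{\alpha,\beta})_{w}$ are accessible from measured pointer correlations, and the unknown overall transition amplitude $Z$ cancels out of every entanglement quantity after normalization.
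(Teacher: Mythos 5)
Your argument is correct and follows essentially the same route as the paper's proof: identify $\hat{M}=\hat{C}/Z$ with $Z=\sum_{\alpha,\beta}\psi_{\alpha,\beta}$, note that the phase of $Z$ drops out in $\hat{M}\hat{M}^{\dagger}$ and the modulus cancels in the normalization, so $\tilde{M}=\tilde{C}$ and its eigenvalues are the squared Schmidt coefficients. Your added observations (the explicit SVD rescaling and the nonvanishing of $Z$ under successful postselection) are harmless elaborations of the same idea.
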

\begin{proof}
The Feynman propagator complex coefficient matrix $\hat{C}$ differs from $\hat{M}$ by a pure phase and a multiplicative constant $\left| \sum_{\alpha^{\prime},\beta^{\prime}}\psi_{\alpha^{\prime},\beta^{\prime}}\right|$. The pure phase is eliminated when the Hermitian matrix $\hat{M}\hat{M}^\dagger$ is generated, whereas the multiplicative constant is taken care of during the normalization
\begin{equation}
\tilde{M}=\frac{\hat{M}\hat{M}^{\dagger}}{\textrm{Tr}\left(\hat{M}\hat{M}^{\dagger}\right)}
\end{equation}
Therefore, the eigenvalues of $\tilde{M}$ are exactly the squared Schmidt coefficients
$\lambda_{1}^{2}\geq\lambda_{2}^{2}\geq\ldots\lambda_{s}^{2}\geq0$.
\end{proof}

\begin{defn}
(Concurrence) Concurrence provides a continuous quantitative measure of entanglement
based on the sum of the fourth powers of the Schmidt coefficients
\cite{Rungta2001,Georgiev2022}
\begin{equation}
\label{eq:concurrence}
\mathcal{C}=\sqrt{2\left(1-\sum_{s}\lambda_{s}^{4}\right)}=\sqrt{2\left[1-\textrm{Tr}\left(\tilde{C}^{2}\right)\right]}
\end{equation}
where $n=\min\left[\textrm{dim}\left(\breve{\mathcal{H}}_{A}\right),\textrm{dim}\left(\breve{\mathcal{H}}_{B}\right)\right]$
is the dimension of the smaller history Hilbert space of the systems~$A$~or~$B$.
\end{defn}

\begin{defn}
(Entanglement entropy) The entanglement entropy is defined as the
Shannon entropy \cite{Shannon1948a} of the squared Schmidt
coefficients \cite{Vedral1997}
\begin{equation}
\mathcal{S}=-\sum_{s}\lambda_{s}^{2}\ln\lambda_{s}^{2}
.
\end{equation}
\end{defn}

\begin{defn}
(Entanglement robustness) The entanglement robustness is given by
the difference between the squared sum of the Schmidt coefficients
and the sum of the squared Schmidt coefficients \cite{Vidal1999}
\begin{equation}
\mathcal{R}=\left(\sum_{s}\lambda_{s}\right)^{2}-\sum_{s}\lambda_{s}^{2}=\left(\sum_{s}\lambda_{s}\right)^{2}-1
.
\end{equation}
\end{defn}

Next, we will elaborate on the construction
of the Feynman propagator complex coefficient matrix $\hat{C}$ and
will compute concurrence for different postselections in Hardy's interferometer. This choice of entanglement measure is motivated
by the manifestation of concurrence in the form of two-particle visibility
\cite{Georgiev2021} and its intimate involvement in quantum complementarity
relations \cite{Jakob2010}. It is worth noting, however, that the
Schmidt coefficients $\lambda_{s}$, which are obtained through singular value
decomposition of the Feynman propagator complex coefficient matrix~$\hat{C}$,
also contain all the necessary information for the calculation
of a large number of other quantum information-theoretic entanglement measures
\cite{Georgiev2022,Gudder2020b,Androulakis2020}.

\section{\label{sec:3}Quantum dynamics of the composite state in Hardy's interferometer}

Lucien Hardy proposed in 1992 an interferometric experiment with an
electron and a positron in order to test the predictions of quantum
mechanics versus the predictions of local hidden variable theories
\cite{Hardy1992}.

\begin{figure*}[t!]
\begin{centering}
\includegraphics[width=110mm]{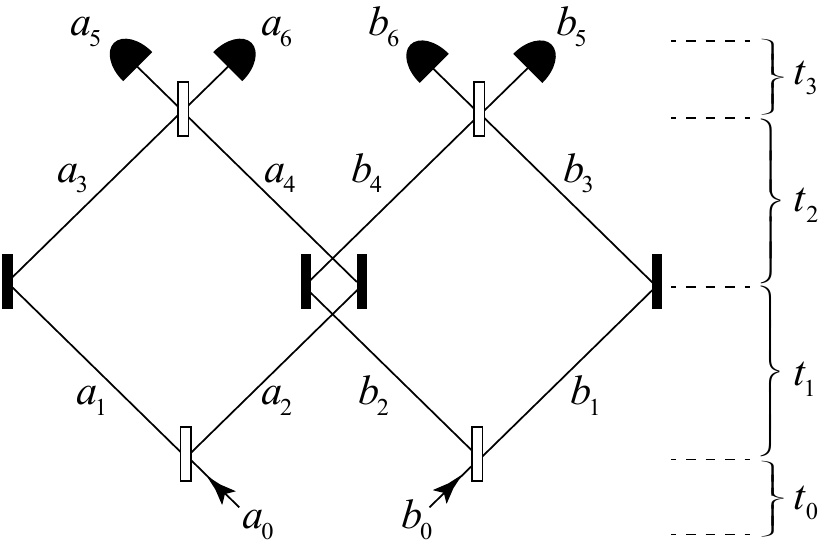}
\par\end{centering}

\caption{\label{fig:1}Hardy's interferometer is constructed from two partially overlapping Mach--Zehnder interferometers, one of which contains an electron (a) and the other contains a positron (b) \cite{Hardy1992}. Possible particle trajectories are indicated with solid lines.
Beamsplitters are indicated with white bars, whereas mirrors are indicated with black bars.}
\end{figure*}

Suppose that at time $t_{0}$ we have an electron at position $a_{0}$
and a positron at position $b_{0}$, both of which are aimed at the
corresponding beamsplitters in the overlapping Mach--Zehnder interferometers
(Fig.~\ref{fig:1}). Employing the formalism of second quantization \cite{Fock1932},
we will introduce electron creation $\hat{a}^{\dagger}$ and annihilation $\hat{a}$ operators or
positron creation $\hat{b}^{\dagger}$ and annihilation $\hat{b}$ operators that act on the vacuum
ket state $|0\rangle$ or the vacuum bra state $\langle 0|$ to produce corresponding Fock states.
Thus, the initial composite quantum state is
\begin{equation}
|\Psi_{0}\rangle=\hat{a}_{0}^{\dagger}\hat{b}_{0}^{\dagger}|0\rangle=|a_{0}\rangle|b_{0}\rangle
.
\end{equation}
At time $t_{1}$, the action of the beamsplitters is given by
\begin{eqnarray}
\hat{a}_{0}^{\dagger}|0\rangle & \rightarrow & \frac{1}{\sqrt{2}}\left(\hat{a}_{1}^{\dagger}+\imath\hat{a}_{2}^{\dagger}\right)|0\rangle ,\\
\hat{b}_{0}^{\dagger}|0\rangle & \rightarrow & \frac{1}{\sqrt{2}}\left(\hat{b}_{1}^{\dagger}+\imath\hat{b}_{2}^{\dagger}\right)|0\rangle .
\end{eqnarray}
The time evolution operator is
\begin{equation}
\hat{\mathcal{T}}(t_{1},t_{0})=\frac{1}{2}\left(\hat{a}_{1}^{\dagger}+\imath\hat{a}_{2}^{\dagger}\right)\left(\hat{b}_{1}^{\dagger}+\imath\hat{b}_{2}^{\dagger}\right)|0\rangle\langle0|\hat{a}_{0}\hat{b}_{0}
.
\end{equation}
Because the collision of electron and positron leads to their annihilation
and production of photons \cite{Colgate1953} that will not be registered by the final particle
detectors, we will take $\hat{a}_{2}^{\dagger}\hat{b}_{2}^{\dagger}|0\rangle=\hat{a}_{4}^{\dagger}\hat{b}_{4}^{\dagger}|0\rangle=0$
with resulting $\frac{1}{4}$ drop of total probability for particle detection.

Straightforward algebraic calculation gives the resulting composite quantum state at $t_{1}$ as
\begin{equation}
|\Psi_{1}\rangle=\frac{1}{2}\left(\hat{a}_{1}^{\dagger}\hat{b}_{1}^{\dagger}+\imath\hat{a}_{1}^{\dagger}\hat{b}_{2}^{\dagger}+\imath\hat{a}_{2}^{\dagger}\hat{b}_{1}^{\dagger}\right)|0\rangle
,
\end{equation}
which evolves after the mirror reflections
\begin{eqnarray}
\hat{a}_{1}^{\dagger}|0\rangle & \rightarrow & \imath\hat{a}_{3}^{\dagger}|0\rangle ,\\
\hat{a}_{2}^{\dagger}|0\rangle & \rightarrow & \imath\hat{a}_{4}^{\dagger}|0\rangle ,\\
\hat{b}_{1}^{\dagger}|0\rangle & \rightarrow & \imath\hat{b}_{3}^{\dagger}|0\rangle ,\\
\hat{b}_{2}^{\dagger}|0\rangle & \rightarrow & \imath\hat{b}_{4}^{\dagger}|0\rangle ,
\end{eqnarray}
with time evolution operator 
\begin{eqnarray}
\hat{\mathcal{T}}(t_{2},t_{1}) &=&-\hat{a}_{3}^{\dagger}\hat{b}_{3}^{\dagger}|0\rangle\langle0|\hat{a}_{1}\hat{b}_{1}-\hat{a}_{3}^{\dagger}\hat{b}_{4}^{\dagger}|0\rangle\langle0|\hat{a}_{1}\hat{b}_{2}\nonumber\\
&&  -\hat{a}_{4}^{\dagger}\hat{b}_{3}^{\dagger}|0\rangle\langle0|\hat{a}_{2}\hat{b}_{1}-\hat{a}_{4}^{\dagger}\hat{b}_{4}^{\dagger}|0\rangle\langle0|\hat{a}_{2}\hat{b}_{2}
\end{eqnarray}
into
\begin{equation}
|\Psi_{2}\rangle=-\frac{1}{2}\left(\hat{a}_{3}^{\dagger}\hat{b}_{3}^{\dagger}+\imath\hat{a}_{3}^{\dagger}\hat{b}_{4}^{\dagger}+\imath\hat{a}_{4}^{\dagger}\hat{b}_{3}^{\dagger}\right)|0\rangle
.
\label{eq:2}
\end{equation}
Finally, at $t_{3}$ after the action of the final beamsplitters
\begin{eqnarray}
\hat{a}_{3}^{\dagger}|0\rangle & \rightarrow & \frac{1}{\sqrt{2}}\left(\imath\hat{a}_{5}^{\dagger}+\hat{a}_{6}^{\dagger}\right)|0\rangle ,\\
\hat{a}_{4}^{\dagger}|0\rangle & \rightarrow & \frac{1}{\sqrt{2}}\left(\hat{a}_{5}^{\dagger}+\imath\hat{a}_{6}^{\dagger}\right)|0\rangle ,\\
\hat{b}_{3}^{\dagger}|0\rangle & \rightarrow & \frac{1}{\sqrt{2}}\left(\imath\hat{b}_{5}^{\dagger}+\hat{b}_{6}^{\dagger}\right)|0\rangle ,\\
\hat{b}_{4}^{\dagger}|0\rangle & \rightarrow & \frac{1}{\sqrt{2}}\left(\hat{b}_{5}^{\dagger}+\imath\hat{b}_{6}^{\dagger}\right)|0\rangle ,
\end{eqnarray}
with time evolution operator
\begin{widetext}
\begin{eqnarray}
\hat{\mathcal{T}}(t_{3},t_{2}) &=& \frac{1}{2}\left(\imath\hat{a}_{5}^{\dagger}+\hat{a}_{6}^{\dagger}\right)\left(\imath\hat{b}_{5}^{\dagger}+\hat{b}_{6}^{\dagger}\right)|0\rangle\langle0|\hat{a}_{3}\hat{b}_{3} 
 +\frac{1}{2}\left(\imath\hat{a}_{5}^{\dagger}+\hat{a}_{6}^{\dagger}\right)\left(\hat{b}_{5}^{\dagger}+\imath\hat{b}_{6}^{\dagger}\right)|0\rangle\langle0|\hat{a}_{3}\hat{b}_{4} \nonumber \\
&& +\frac{1}{2}\left(\hat{a}_{5}^{\dagger}+\imath\hat{a}_{6}^{\dagger}\right)\left(\imath\hat{b}_{5}^{\dagger}+\hat{b}_{6}^{\dagger}\right)|0\rangle\langle0|\hat{a}_{4}\hat{b}_{3} 
 +\frac{1}{2}\left(\hat{a}_{5}^{\dagger}+\imath\hat{a}_{6}^{\dagger}\right)\left(\hat{b}_{5}^{\dagger}+\imath\hat{b}_{6}^{\dagger}\right)|0\rangle\langle0|\hat{a}_{4}\hat{b}_{4}
,
\end{eqnarray}
%\end{widetext}
the particles arrive at the detectors in the composite state
%\begin{widetext}
\begin{align}
|\Psi_{3}\rangle & =-\frac{1}{4}\left[\left(\imath\hat{a}_{5}^{\dagger}+\hat{a}_{6}^{\dagger}\right)\left(\imath\hat{b}_{5}^{\dagger}+\hat{b}_{6}^{\dagger}\right)+\imath\left(\imath\hat{a}_{5}^{\dagger}+\hat{a}_{6}^{\dagger}\right)\left(\hat{b}_{5}^{\dagger}+\imath\hat{b}_{6}^{\dagger}\right)+\imath\left(\hat{a}_{5}^{\dagger}+\imath\hat{a}_{6}^{\dagger}\right)\left(\imath\hat{b}_{5}^{\dagger}+\hat{b}_{6}^{\dagger}\right)\right]|0\rangle\nonumber \\
 & =\frac{1}{4}\left[\left(\hat{a}_{5}^{\dagger}\hat{b}_{5}^{\dagger}-\imath\hat{a}_{6}^{\dagger}\hat{b}_{5}^{\dagger}-\imath\hat{a}_{5}^{\dagger}\hat{b}_{6}^{\dagger}-\hat{a}_{6}^{\dagger}\hat{b}_{6}^{\dagger}\right)+\left(\hat{a}_{5}^{\dagger}\hat{b}_{5}^{\dagger}-\imath\hat{a}_{6}^{\dagger}\hat{b}_{5}^{\dagger}+\imath\hat{a}_{5}^{\dagger}\hat{b}_{6}^{\dagger}+\hat{a}_{6}^{\dagger}\hat{b}_{6}^{\dagger}\right)+\left(\hat{a}_{5}^{\dagger}\hat{b}_{5}^{\dagger}+\imath\hat{a}_{6}^{\dagger}\hat{b}_{5}^{\dagger}-\imath\hat{a}_{5}^{\dagger}\hat{b}_{6}^{\dagger}+\hat{a}_{6}^{\dagger}\hat{b}_{6}^{\dagger}\right)\right]|0\rangle\nonumber \\
 & =\frac{1}{4}\left(3\hat{a}_{5}^{\dagger}\hat{b}_{5}^{\dagger}-\imath\hat{a}_{6}^{\dagger}\hat{b}_{5}^{\dagger}-\imath\hat{a}_{5}^{\dagger}\hat{b}_{6}^{\dagger}+\hat{a}_{6}^{\dagger}\hat{b}_{6}^{\dagger}\right)|0\rangle\label{eq:3}
.
\end{align}
%\end{widetext}
The quantum probabilities of simultaneous detection of an electron and a positron
at the four particles detectors are $P\left(a_{5},b_{5}\right)=\frac{9}{16}$,
$P\left(a_{5},b_{6}\right)=\frac{1}{16}$,
$P\left(a_{6},b_{5}\right)=\frac{1}{16}$
and $P\left(a_{6},b_{6}\right)=\frac{1}{16}$.
The sum of probabilities for detection of an electron and a positron
at the end of the experiment is $\frac{3}{4}$ because the
electron and positron annihilate inside the interferometer in
$\frac{1}{4}$ of all cases.

\section{\label{sec:4}Classical histories are unable to reproduce the quantum
outcomes}

The probability for the electron to arrive at $a_{5}$ or $a_{6}$,
when the positron is detected at $b_{3}$ or $b_{4}$, can be computed
using the Born rule applied on the final composite state vector
%\begin{widetext}
\begin{align}
|\Psi_{3}^{(a)}\rangle & =-\frac{1}{\sqrt{8}}\left[\left(\imath\hat{a}_{5}^{\dagger}+\hat{a}_{6}^{\dagger}\right)\hat{b}_{3}^{\dagger}+\imath\left(\imath\hat{a}_{5}^{\dagger}+\hat{a}_{6}^{\dagger}\right)\hat{b}_{4}^{\dagger}+\imath\left(\hat{a}_{5}^{\dagger}+\imath\hat{a}_{6}^{\dagger}\right)\hat{b}_{3}^{\dagger}\right]|0\rangle\nonumber \\
 & =-\frac{1}{\sqrt{8}}\left[\imath\hat{a}_{5}^{\dagger}\hat{b}_{3}^{\dagger}+\hat{a}_{6}^{\dagger}\hat{b}_{3}^{\dagger}-\hat{a}_{5}^{\dagger}\hat{b}_{4}^{\dagger}+\imath\hat{a}_{6}^{\dagger}\hat{b}_{4}^{\dagger}+\imath\hat{a}_{5}^{\dagger}\hat{b}_{3}^{\dagger}-\hat{a}_{6}^{\dagger}\hat{b}_{3}^{\dagger}\right]|0\rangle\nonumber \\
 & =-\frac{1}{\sqrt{8}}\left[2\imath\hat{a}_{5}^{\dagger}\hat{b}_{3}^{\dagger}-\hat{a}_{5}^{\dagger}\hat{b}_{4}^{\dagger}+\imath\hat{a}_{6}^{\dagger}\hat{b}_{4}^{\dagger}\right]|0\rangle\label{eq:a}
.
\end{align}
%\end{widetext}
The probabilities of simultaneous detection of the electron at the
final detectors and the positron inside the interferometer arms is
$P\left(a_{5},b_{3}\right)=\frac{1}{2}$, $P\left(a_{5},b_{4}\right)=P\left(a_{6},b_{4}\right)=\frac{1}{8}$
and $P\left(a_{6},b_{3}\right)=0$. This implies that the conditional
probabilities \cite{Kolmogorov1956} for positron detection at $b_{3}$ or $b_{4}$ given
electron postselection at $a_{6}$ are
\begin{align}
P\left(b_{4}|a_{6}\right) & =\frac{P\left(a_{6},b_{4}\right)}{P\left(a_{6},b_{4}\right)+P\left(a_{6},b_{3}\right)}=1 ,\\
P\left(b_{3}|a_{6}\right) & =\frac{P\left(a_{6},b_{3}\right)}{P\left(a_{6},b_{4}\right)+P\left(a_{6},b_{3}\right)}=0 .
\end{align}
Similarly, the probability for the positron to arrive at $b_{5}$
or $b_{6}$, when the electron is detected at $a_{3}$ or $a_{4}$,
can be computed using the Born rule applied on the final composite state vector
%\begin{widetext}
\begin{align}
|\Psi_{3}^{(b)}\rangle & =-\frac{1}{\sqrt{8}}\left[\hat{a}_{3}^{\dagger}\left(\imath\hat{b}_{5}^{\dagger}+\hat{b}_{6}^{\dagger}\right)+\imath\hat{a}_{3}^{\dagger}\left(\hat{b}_{5}^{\dagger}+\imath\hat{b}_{6}^{\dagger}\right)+\imath\hat{a}_{4}^{\dagger}\left(\imath\hat{b}_{5}^{\dagger}+\hat{b}_{6}^{\dagger}\right)\right]|0\rangle\nonumber \\
 & =-\frac{1}{\sqrt{8}}\left[\imath\hat{a}_{3}^{\dagger}\hat{b}_{5}^{\dagger}+\hat{a}_{3}^{\dagger}\hat{b}_{6}^{\dagger}+\imath\hat{a}_{3}^{\dagger}\hat{b}_{5}^{\dagger}-\hat{a}_{3}^{\dagger}\hat{b}_{6}^{\dagger}-\hat{a}_{4}^{\dagger}\hat{b}_{5}^{\dagger}+\imath\hat{a}_{4}^{\dagger}\hat{b}_{6}^{\dagger}\right]|0\rangle\nonumber \\
 & =-\frac{1}{\sqrt{8}}\left[2\imath\hat{a}_{3}^{\dagger}\hat{b}_{5}^{\dagger}-\hat{a}_{4}^{\dagger}\hat{b}_{5}^{\dagger}+\imath\hat{a}_{4}^{\dagger}\hat{b}_{6}^{\dagger}\right]|0\rangle\label{eq:b}
.
\end{align}
\end{widetext}
The probabilities of simultaneous detection of the positron at the
final detectors and the electron inside the interferometer arms is
$P\left(a_{3},b_{5}\right)=\frac{1}{2}$, $P\left(a_{4},b_{5}\right)=P\left(a_{4},b_{6}\right)=\frac{1}{8}$
and $P\left(a_{3},b_{6}\right)=0$. This implies that the conditional
probabilities for electron detection at $a_{3}$ or $a_{4}$ given
positron postselection at $b_{6}$ are
\begin{align}
P\left(a_{4}|b_{6}\right) & =\frac{P\left(a_{4},b_{6}\right)}{P\left(a_{4},b_{6}\right)+P\left(a_{3},b_{6}\right)}=1 ,\\
P\left(a_{3}|b_{6}\right) & =\frac{P\left(a_{3},b_{6}\right)}{P\left(a_{4},b_{6}\right)+P\left(a_{3},b_{6}\right)}=0 .
\end{align}
Now, if one conjectures that the electron and the positron could have
traveled along a single classical path along one of two alternative
arms in the corresponding interferometers, a contradiction will occur
as follows: From the electron detection at $a_{6}$ one uses $P\left(b_{4}|a_{6}\right)=1$
to infer that the positron has certainly passed along path $b_{4}$.
Similarly, from the positron detection at $b_{6}$ one uses $P\left(a_{4}|b_{6}\right)=1$
to infer that the electron has certainly passed along path $a_{4}$.
However, the collision of the electron and positron at the crossing
of arms $a_{4}$ and $b_{4}$ leads to zero probability of particle
detection at $a_{6}$ and $b_{6}$ 
\begin{equation}
P\left(a_{6},b_{6}|a_{4},b_{4}\right)=0
.
\end{equation}
Arriving at a contradiction, proves that the conjecture of quantum
particles moving along single classical paths is false. Indeed, next we
will show that the predictions of quantum mechanics arise exactly
because the quantum histories can be entangled.

\section{\label{sec:5}Illustration of entangled quantum histories}

Quantum analysis of Hardy's interferometer from the initial (preselected) state
$|a_{0}\rangle|b_{0}\rangle$ to final (postselected) state $|a_{6}\rangle|b_{6}\rangle$
reveals a 4-dimensional composite quantum history Hilbert space $\breve{\mathcal{H}}_{A\otimes B}$,
where individual systems explore 2-dimensional quantum history Hilbert
spaces $\breve{\mathcal{H}}_{A}$ and $\breve{\mathcal{H}}_{B}$.
In order to ease the notation, we will compress the projectors
as follows
\begin{equation}
\hat{\mathcal{P}}\left(a_{i}b_{j}\right):=\hat{a}_{i}^{\dagger}\hat{b}_{j}^{\dagger}|0\rangle\langle0|\hat{a}_{i}\hat{b}_{j}
=|a_{i}\rangle|b_{j}\rangle\langle a_{i}|\langle b_{j}|
.
\end{equation}
Then, the four composite quantum histories are
\begin{align}
\hat{\mathcal{Q}}_{1,1} & =\hat{\mathcal{P}}\left(a_{6}b_{6}\right)\odot\hat{\mathcal{P}}\left(a_{3}b_{3}\right)\odot\hat{\mathcal{P}}\left(a_{1}b_{1}\right)\odot\hat{\mathcal{P}}\left(a_{0}b_{0}\right) ,\\
\hat{\mathcal{Q}}_{1,2} & =\hat{\mathcal{P}}\left(a_{6}b_{6}\right)\odot\hat{\mathcal{P}}\left(a_{3}b_{4}\right)\odot\hat{\mathcal{P}}\left(a_{1}b_{2}\right)\odot\hat{\mathcal{P}}\left(a_{0}b_{0}\right) ,\\
\hat{\mathcal{Q}}_{2,1} & =\hat{\mathcal{P}}\left(a_{6}b_{6}\right)\odot\hat{\mathcal{P}}\left(a_{4}b_{3}\right)\odot\hat{\mathcal{P}}\left(a_{2}b_{1}\right)\odot\hat{\mathcal{P}}\left(a_{0}b_{0}\right) ,\\
\hat{\mathcal{Q}}_{2,2} & =\hat{\mathcal{P}}\left(a_{6}b_{6}\right)\odot\hat{\mathcal{P}}\left(a_{4}b_{4}\right)\odot\hat{\mathcal{P}}\left(a_{2}b_{2}\right)\odot\hat{\mathcal{P}}\left(a_{0}b_{0}\right) .
\end{align}

\begin{figure*}[t!]
\begin{centering}
\includegraphics[width=150mm]{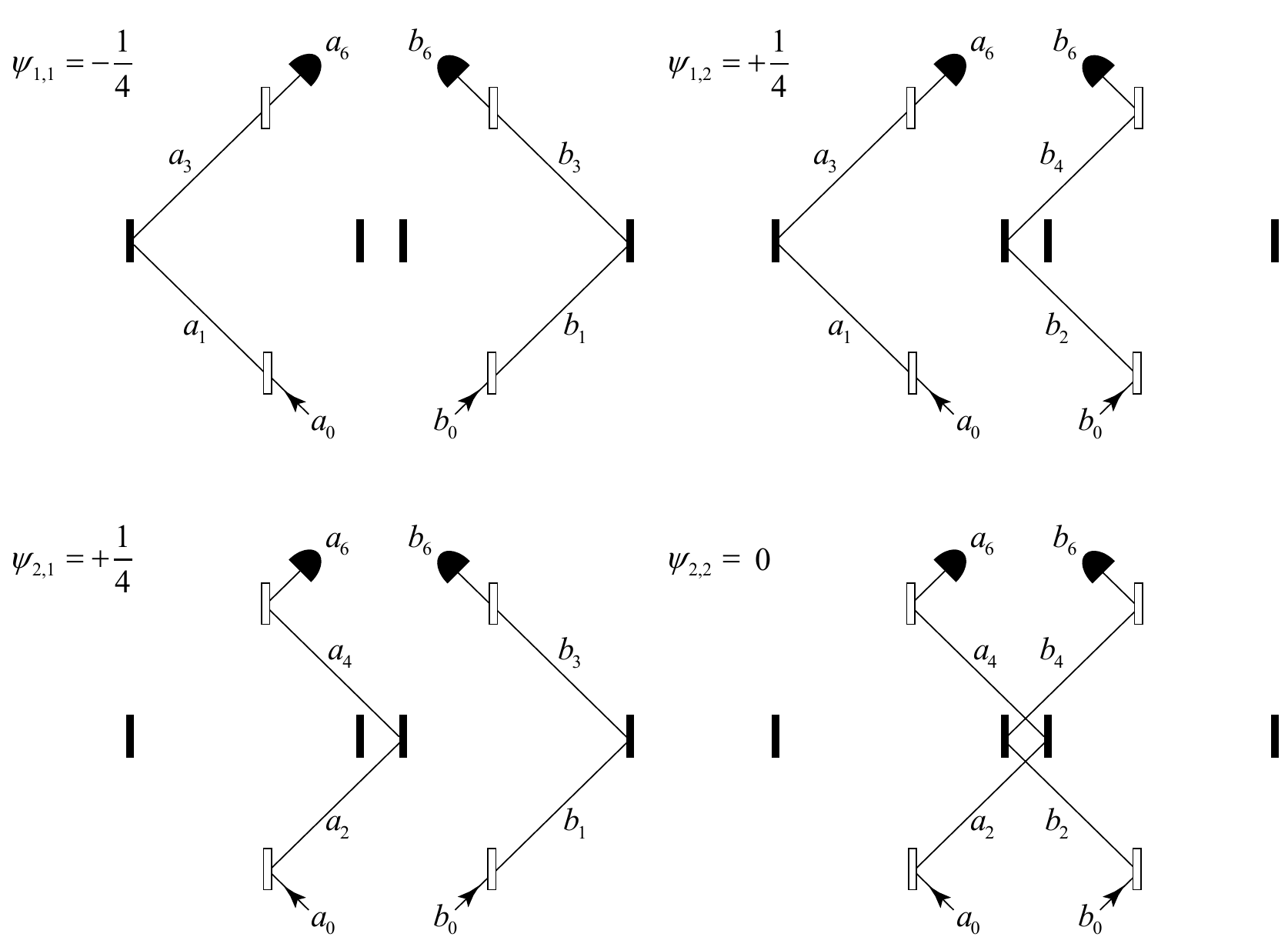}
\par\end{centering}

\caption{\label{fig:2}Illustration of the four possible composite quantum
histories, which lead to electron emerging at $a_{6}$ and positron
emerging at $b_{6}$, together with the corresponding Feynman propagators
$\psi_{\alpha,\beta}$. Due to electron-positron annihilation, one
of the histories has zero Feynman propagator, which leaves a set of
3 non-zero quantum histories in quantum entangled superposition.}
\end{figure*}

\noindent The Feynman propagators $\psi_{\alpha,\beta}=\langle a_{6}|\langle b_{6}|\hat{K}_{\alpha,\beta}|a_{0}\rangle|b_{0}\rangle$,
in which $\hat{K}_{\alpha,\beta}$ is the chain operator corresponding to the quantum
history $\hat{\mathcal{Q}}_{\alpha,\beta}$, can be arranged in the
propagator complex coefficient matrix as follows (Fig.~\ref{fig:2})
\begin{equation}
\hat{C}=\left(\begin{array}{cc}
\psi_{1,1} & \psi_{1,2}\\
\psi_{2,1} & \psi_{2,2}
\end{array}\right)=\frac{1}{4}\left(\begin{array}{cc}
-1 & 1\\
1 & 0
\end{array}\right)
.
\end{equation}
Because we are interested in postselected outcome, normalization can
be easily performed at the stage of computing the Hermitian matrix
$\hat{C}\hat{C}^{\dagger}$ where we use the fact that the squared
Schmidt coefficients sum to unity
\begin{equation}
\tilde{C}=\frac{\hat{C}\hat{C}^{\dagger}}{\textrm{Tr}\left(\hat{C}\hat{C}^{\dagger}\right)}=\frac{1}{3}\left(\begin{array}{cc}
2 & -1\\
-1 & 1
\end{array}\right)
.
\end{equation}
The two eigenvalues expressed as squared Schmidt coefficients are
\begin{eqnarray}
\lambda_{1}^{2} &=& \frac{1}{6}\left(3+\sqrt{5}\right), \label{eq:S1}\\
\lambda_{2}^{2} &=& \frac{1}{6}\left(3-\sqrt{5}\right). \label{eq:S2}
\end{eqnarray}
The amount of quantum entanglement quantified with the use of concurrence is
\begin{equation}
\mathcal{C}=\sqrt{2\left[1-\textrm{Tr}\left(\tilde{C}^{2}\right)\right]}=\frac{2}{3}
.
\end{equation}

Entanglement of the quantum histories arriving at $|a_{6}\rangle|b_{6}\rangle$ has been experimentally observed in an optical version of Hardy's setup \cite{Lundeen2009,Yokota2009} using measurement of the corresponding weak values \cite{Aharonov2002}, which can be arranged in the following weak value matrix
\begin{equation}
\hat{M}=\left(\begin{array}{cc}
\left(Q_{1,1}\right)_{w} & \left(Q_{1,2}\right)_{w}\\
\left(Q_{2,1}\right)_{w} & \left(Q_{2,2}\right)_{w}
\end{array}\right)=\left(\begin{array}{cc}
-1 & 1\\
1 & 0
\end{array}\right)
.
\end{equation}
After normalization, the weak value matrix reproduces the Schmidt coefficients \eqref{eq:S1} and \eqref{eq:S2} obtained from $\hat{C}$.
Thus, the technique of weak measurements \cite{Aharonov2014} is able to provide a bridge between the theoretical description of entangled histories in terms of propagators and their experimental registration in terms of weak values.

Quantum entanglement can be similarly quantified for the other three final
postselected states (besides $|a_{6}\rangle|b_{6}\rangle$), all of which have concurrence of $\frac{2}{3}$.

For final (postselected) state $|a_{6}\rangle|b_{5}\rangle$, the
quantum histories end with the final projector $\hat{\mathcal{P}}\left(a_{6}b_{5}\right)$
and the propagator complex coefficient matrix is (Fig.~\ref{fig:3})
\begin{equation}
\hat{C}=\left(\begin{array}{cc}
\psi_{1,1} & \psi_{1,2}\\
\psi_{2,1} & \psi_{2,2}
\end{array}\right)=\frac{1}{4}\left(\begin{array}{cc}
-\imath & -\imath\\
\imath & 0
\end{array}\right)
.
\end{equation}

\begin{figure*}[t!]
\begin{centering}
\includegraphics[width=150mm]{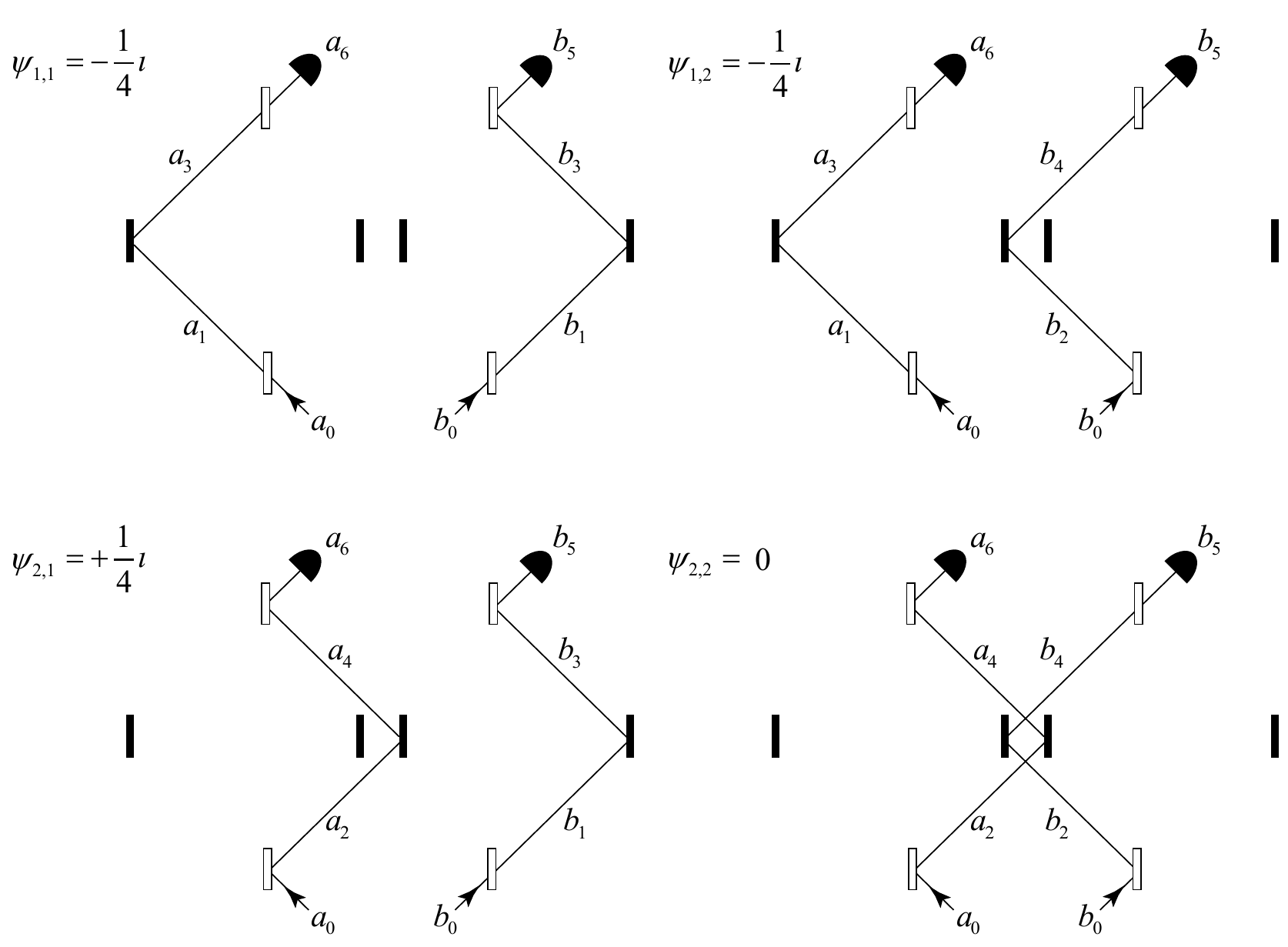}
\par\end{centering}

\caption{\label{fig:3}Illustration of the four possible composite quantum
histories, which lead to electron emerging at $a_{6}$ and positron
emerging at $b_{5}$, together with the corresponding Feynman propagators
$\psi_{\alpha,\beta}$. Due to electron-positron annihilation, one
of the histories has zero Feynman propagator, which leaves a set of
3 non-zero quantum histories in quantum entangled superposition.}
\end{figure*}

For final (postselected) state $|a_{5}\rangle|b_{6}\rangle$, the
quantum histories end with the final projector $\hat{\mathcal{P}}\left(a_{5}b_{6}\right)$
and the propagator complex coefficient matrix is (Fig.~\ref{fig:4})
\begin{equation}
\hat{C}=\left(\begin{array}{cc}
\psi_{1,1} & \psi_{1,2}\\
\psi_{2,1} & \psi_{2,2}
\end{array}\right)=\frac{1}{4}\left(\begin{array}{cc}
-\imath & \imath\\
-\imath & 0
\end{array}\right)
.
\end{equation}

\begin{figure*}[t!]
\begin{centering}
\includegraphics[width=150mm]{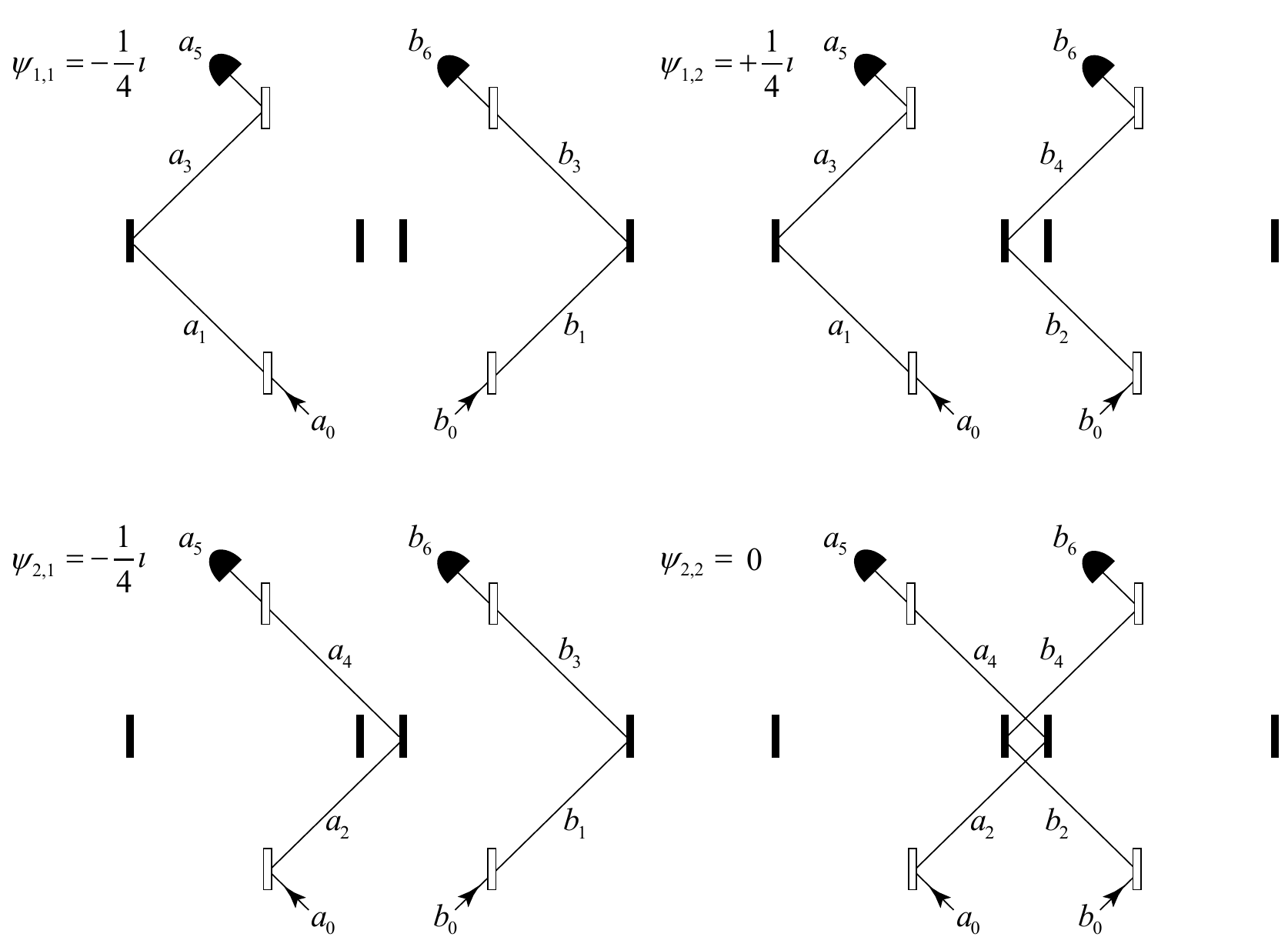}
\par\end{centering}

\caption{\label{fig:4}Illustration of the four possible composite quantum
histories, which lead to electron emerging at $a_{5}$ and positron
emerging at $b_{6}$, together with the corresponding Feynman propagators
$\psi_{\alpha,\beta}$. Due to electron-positron annihilation, one
of the histories has zero Feynman propagator, which leaves a set of
3 non-zero quantum histories in quantum entangled superposition.}
\end{figure*}

For final (postselected) state $|a_{5}\rangle|b_{5}\rangle$, the
propagator complex coefficient matrix is (Fig.~\ref{fig:5})
\begin{equation}
\hat{C}=\left(\begin{array}{cc}
\psi_{1,1} & \psi_{1,2}\\
\psi_{2,1} & \psi_{2,2}
\end{array}\right)=\frac{1}{4}\left(\begin{array}{cc}
1 & 1\\
1 & 0
\end{array}\right)
.
\end{equation}

\begin{figure*}[t!]
\begin{centering}
\includegraphics[width=150mm]{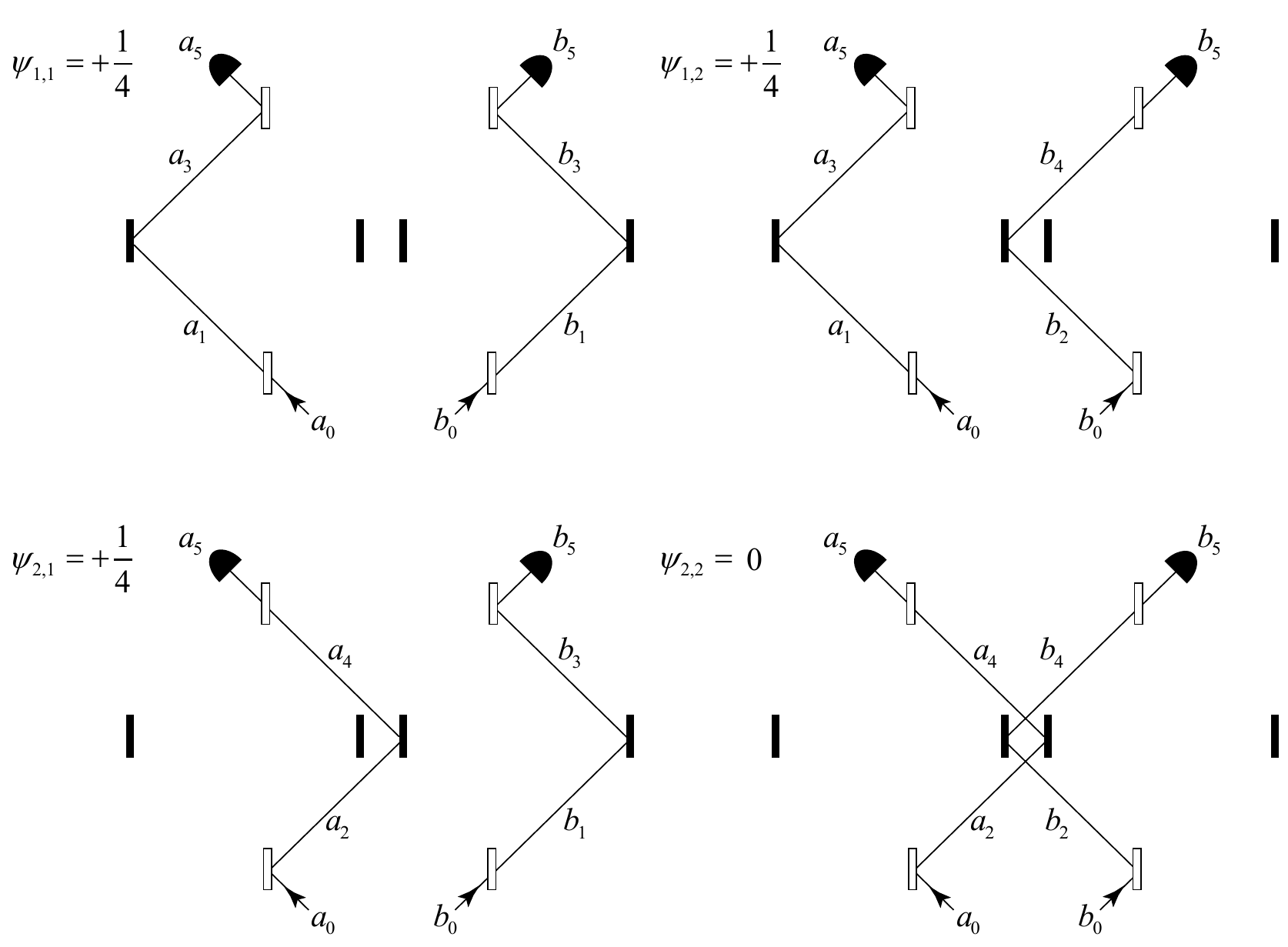}
\par\end{centering}

\caption{\label{fig:5}Illustration of the four possible composite quantum
histories, which lead to electron emerging at $a_{5}$ and positron
emerging at $b_{5}$, together with the corresponding Feynman propagators
$\psi_{\alpha,\beta}$. Due to electron-positron annihilation, one
of the histories has zero Feynman propagator, which leaves a set of
3 non-zero quantum histories in quantum entangled superposition.}
\end{figure*}

The previous four $2\times2$ propagator complex coefficient matrices
can be combined in a single $4\times4$ propagator complex coefficient
matrix in which all four pairs of final electron-positron outcomes
are present using the mapping $1\equiv1\odot5$, $2\equiv2\odot5$,
$3\equiv1\odot6$, $4\equiv2\odot6$ for the single indices $\alpha$
and $\beta$ 
\begin{equation}
\hat{C}=\left(\begin{array}{cccc}
\psi_{1,1} & \psi_{1,2} & \psi_{1,3} & \psi_{1,4}\\
\psi_{2,1} & \psi_{2,2} & \psi_{2,3} & \psi_{2,4}\\
\psi_{3,1} & \psi_{3,2} & \psi_{3,3} & \psi_{3,4}\\
\psi_{4,1} & \psi_{4,2} & \psi_{4,3} & \psi_{4,4}
\end{array}\right)=\frac{1}{4}\left(\begin{array}{cccc}
1 & 1 & -\imath & \imath\\
1 & 0 & -\imath & 0\\
-\imath & -\imath & -1 & 1\\
\imath & 0 & 1 & 0
\end{array}\right)
.
\end{equation}
Computing the concurrence from the combined $4\times4$ propagator
complex coefficient matrix also gives $\mathcal{C}=\frac{2}{3}$,
which was found for each of the individual $2\times2$ propagator
complex coefficient matrices.

The treatment of Hardy's interferometer is comparatively simple because the discretization of the setup is suggested by the particular places and times at which the particles meet the beamsplitters, mirrors and detectors. Although, formally the Hilbert space~$\mathcal{H}$ of the composite system is $7\times7=49$ dimensional, the composite quantum state vector is sparsely populated in the position basis with at most $2\times2=4$ non-zero quantum probability amplitudes at each time point. This effectively reduces the dimensionality of the Hilbert space of each subsystem to $n=2$ and greatly simplifies the computation of the Feynman propagator matrix.
For more general setups, however, it may not be evident what time steps to consider and the dimensionality of the Hilbert space may not be trivial to decide.
In such cases, a possible heuristic course of action would be to run customized numerical simulations with finer and finer coarse-graining of the position space to see if the obtained computational results converge towards a single outcome.
 
\section{\label{sec:6}Entangled quantum histories support quantum nonlocality}

Having shown that quantum systems avoid classical paradoxes by having
access to propagation along entangled quantum histories, we proceed
to show that entangled quantum histories can be used for experimental
demonstration of quantum nonlocality. For that purpose, we can use
all previous derivations for Hardy's interferometer, and supplement them
with the additional assumption that two spacelike separated agents,
Alice and Bob, are able to use their free will to either keep or remove
the final beamsplitter for the electron or the positron, respectively.
The spacelike separation is ensured by placing the final beamsplitters
far away and locating the final detectors $a_{5}$, $a_{6}$, $b_{5}$
and $b_{6}$ very close to the beamsplitters so that there is not
enough time for a light signal to inform the opposite party what their
setting of the final beamsplitter is.\\

\emph{Case 1}: Under the enforcement of spacelike separation, when
both Alice and Bob keep the final beamsplitters, from \eqref{eq:3}
we know that the final quantum state is
\begin{equation}
|\Psi_{3}^{(A_{+}B_{+})}\rangle=\frac{1}{4}\left(3\hat{a}_{5}^{\dagger}\hat{b}_{5}^{\dagger}-\imath\hat{a}_{6}^{\dagger}\hat{b}_{5}^{\dagger}-\imath\hat{a}_{5}^{\dagger}\hat{b}_{6}^{\dagger}+\hat{a}_{6}^{\dagger}\hat{b}_{6}^{\dagger}\right)|0\rangle
.
\end{equation}
The quantum probabilities for joint detections are $P\left(a_{5},b_{5}\right)=\frac{9}{16}$,
$P\left(a_{5},b_{6}\right)=\frac{1}{16}$, $P\left(a_{6},b_{5}\right)=\frac{1}{16}$
and $P\left(a_{6},b_{6}\right)=\frac{1}{16}$. The unconditional probabilities
for single particle detection are: $P(a_{5})=\frac{5}{8}$, $P(a_{6})=\frac{1}{8}$,
$P(b_{5})=\frac{5}{8}$ and $P(b_{6})=\frac{1}{8}$.\\

\emph{Case 2}: When Alice keeps the final beamsplitter for the electron,
while Bob removes the final beamsplitter for the positron, the outcomes
are obtained from updating the state \eqref{eq:a} using $\hat{b}_{3}^{\dagger}|0\rangle\to\hat{b}_{6}^{\dagger}|0\rangle$
and $\hat{b}_{4}^{\dagger}|0\rangle\to\hat{b}_{5}^{\dagger}|0\rangle$
as follows
\begin{equation}
|\Psi_{3}^{(A_{+}B_{-})}\rangle=-\frac{1}{\sqrt{8}}\left[2\imath\hat{a}_{5}^{\dagger}\hat{b}_{6}^{\dagger}-\hat{a}_{5}^{\dagger}\hat{b}_{5}^{\dagger}+\imath\hat{a}_{6}^{\dagger}\hat{b}_{5}^{\dagger}\right]|0\rangle
.
\end{equation}
In this case, the quantum probabilities for joint detections are $P(a_{5},b_{5})=\frac{1}{8}$,
$P(a_{5},b_{6})=\frac{1}{2}$, $P(a_{6},b_{5})=\frac{1}{8}$ and $P(a_{6},b_{6})=0$.
The unconditional probabilities for single particle detection are:
$P(a_{5})=\frac{5}{8}$, $P(a_{6})=\frac{1}{8}$, $P(b_{5})=\frac{1}{4}$
and $P(b_{6})=\frac{1}{2}$.\\

\emph{Case 3}: Similarly, when Alice removes the final beamsplitter
for the electron, while Bob keeps the final beamsplitter for the positron,
the outcomes are obtained from updating the state \eqref{eq:b} using
$\hat{a}_{3}^{\dagger}|0\rangle\to\hat{a}_{6}^{\dagger}|0\rangle$
and $\hat{a}_{4}^{\dagger}|0\rangle\to\hat{a}_{5}^{\dagger}|0\rangle$
as follows
\begin{equation}
|\Psi_{3}^{(A_{-}B_{+})}\rangle=-\frac{1}{\sqrt{8}}\left[2\imath\hat{a}_{6}^{\dagger}\hat{b}_{5}^{\dagger}-\hat{a}_{5}^{\dagger}\hat{b}_{5}^{\dagger}+\imath\hat{a}_{5}^{\dagger}\hat{b}_{6}^{\dagger}\right]|0\rangle
.
\end{equation}
In this case, the quantum probabilities for joint detections are $P(a_{5},b_{5})=\frac{1}{8}$,
$P(a_{5},b_{6})=\frac{1}{8}$, $P(a_{6},b_{5})=\frac{1}{2}$ and $P(a_{6},b_{6})=0$.
The unconditional probabilities for single particle detection are:
$P(a_{5})=\frac{1}{4}$, $P(a_{6})=\frac{1}{2}$, $P(b_{5})=\frac{5}{8}$
and $P(b_{6})=\frac{1}{8}$.\\

\emph{Case 4}: Lastly, we will need to compute the probabilities for
joint detection when both Alice and Bob remove their final beamsplitters.
The outcomes are obtained from updating the state \eqref{eq:2} using
$\hat{a}_{3}^{\dagger}|0\rangle\to\hat{a}_{6}^{\dagger}|0\rangle$,
$\hat{a}_{4}^{\dagger}|0\rangle\to\hat{a}_{5}^{\dagger}|0\rangle$,
$\hat{b}_{3}^{\dagger}|0\rangle\to\hat{b}_{6}^{\dagger}|0\rangle$
and $\hat{b}_{4}^{\dagger}|0\rangle\to\hat{b}_{5}^{\dagger}|0\rangle$
as follows
\begin{equation}
|\Psi_{3}^{(A_{-}B_{-})}\rangle=-\frac{1}{2}\left(\hat{a}_{6}^{\dagger}\hat{b}_{6}^{\dagger}+\imath\hat{a}_{6}^{\dagger}\hat{b}_{5}^{\dagger}+\imath\hat{a}_{5}^{\dagger}\hat{b}_{6}^{\dagger}\right)|0\rangle
.
\end{equation}
In this case, the quantum probabilities for joint detections are $P(a_{5},b_{5})=0$,
$P(a_{5},b_{6})=\frac{1}{4}$, $P(a_{6},b_{5})=\frac{1}{4}$ and $P(a_{6},b_{6})=\frac{1}{4}$.
The unconditional probabilities for single particle detection are:
$P(a_{5})=\frac{1}{4}$, $P(a_{6})=\frac{1}{2}$, $P(b_{5})=\frac{1}{4}$
and $P(b_{6})=\frac{1}{2}$.\\

Now, if one conjectures that the electron and the positron could have
produced their arrival at their corresponding particle detectors $a_{6}$
and $b_{6}$ without taking into consideration the setting of the
distant spacelike separated beamsplitter, a contradiction will occur
as follows: Consider the most general local strategy that the electron
and positron could employ by taking into consideration only the local
setting of the corresponding final beamsplitter together with the
past interferometer arm from which they come in order to accommodate
the correct $\frac{1}{4}$ probability of electron-positron annihilation.
There will be four non-negative probability weights $x_{1}^{\pm},x_{2}^{\pm}$
for the electron to go to detector $a_{6}$ depending on the interferometer
arm $1,2$ and the beamsplitter setting $\pm$. Similarly, there will
be four non-negative probability weights $y_{1}^{\pm},y_{2}^{\pm}$
for the positron to go to detector $b6$. The joint probability of
arrival at $a_{6}$ and $b_{6}$ for the local model should match
the quantum probability $P(a_{6},b_{6})$ for all four possible beamsplitter
settings, namely 
\begin{align}
\frac{1}{4}\left(x_{1}^{+}y_{1}^{+}+x_{1}^{+}y_{2}^{+}+x_{2}^{+}y_{1}^{+}\right) & \neq0 , \label{eq:l1}\\
\frac{1}{4}\left(x_{1}^{+}y_{1}^{-}+x_{1}^{+}y_{2}^{-}+x_{2}^{+}y_{1}^{-}\right) & =0 , \label{eq:l2}\\
\frac{1}{4}\left(x_{1}^{-}y_{1}^{+}+x_{1}^{-}y_{2}^{+}+x_{2}^{-}y_{1}^{+}\right) & =0 , \label{eq:l3}\\
\frac{1}{4}\left(x_{1}^{-}y_{1}^{-}+x_{1}^{-}y_{2}^{-}+x_{2}^{-}y_{1}^{-}\right) & \neq0 .\label{eq:l4}
\end{align}
Further, since the unconditional quantum probabilities are $P(a_{6})\neq0$
and $P(b_{6})\neq0$ for all four possible beamsplitter settings,
we also should have
\begin{align}
x_{1}^{+}+x_{2}^{+} & \neq0 ,\label{eq:l5}\\
x_{1}^{-}+x_{2}^{-} & \neq0 ,\label{eq:l6}\\
y_{1}^{+}+y_{2}^{+} & \neq0 ,\label{eq:l7}\\
y_{1}^{-}+y_{2}^{-} & \neq0 .\label{eq:l8}
\end{align}
To show that the system of equations \eqref{eq:l1}-\eqref{eq:l8}
does not have a solution, one can derive a contradiction in multiple
ways. One of the shortest derivations is to substitute \eqref{eq:l8}
in \eqref{eq:l2} to get $x_{1}^{+}=0$. Similarly, substitute \eqref{eq:l6}
in \eqref{eq:l3} to get $y_{1}^{+}=0$. Then, the joint substitution
of $x_{1}^{+}=y_{1}^{+}=0$ in \eqref{eq:l1} leads to contradiction
$0\neq0$, which implies that there exists no local hidden variable
model that is able to reproduce the experimental results from
Hardy's interferometer. Because the mathematical proof does not use the actual
numerical values of the probabilities, but only cares to distinguish
between zero and non-zero probabilities, the demonstration of quantum
nonlocality by this approach is often referred to as Bell's theorem
without inequalities \cite{Hardy1993,Cabello2001}.

Quantum nonlocality manifested in the observable correlations of spacelike separated quantum measurements
cannot be used for superluminal communication because the unconditional
probabilities for each local outcome remain unaffected by the distant choice
of measurement basis \cite{Peres2004}. Indeed, pairwise comparison
of the four cases in Hardy's interferometer verifies that the unconditional
quantum probabilities for different measurement outcomes $a_5$, $a_6$, $b_5$ or $b_6$ are affected
only by the local setting of the final beamsplitter, but remain unchanged
by the setting of the other spacelike separated final beamsplitter.

\section{Conclusion}

In this work, we have elaborated on three related, but mathematically distinct objects
that describe the notion of \emph{quantum history} in different contexts.
Firstly,
we have clarified the fact that in the context of the history Hilbert space $\breve{\mathcal{H}}$,
the bipartite quantum histories are constructed as mutually orthogonal projection operators $\textrm{Tr}[\mathcal{\hat{Q}}_{\alpha,\beta}\cdot\mathcal{\hat{Q}}_{\alpha^{\prime},\beta^{\prime}}]=\delta_{\alpha\alpha^{\prime}}\delta_{\beta\beta^{\prime}}$.
Employing only local projection operators for each subsystem at each time point ensures that the resulting bipartite quantum history projection operators, which span the history Hilbert space, are separable (not entangled), $\mathcal{\hat{Q}}_{\alpha,\beta}=\mathcal{\hat{Q}}_{\alpha}\otimes\mathcal{\hat{Q}}_{\beta}$, and hence suitable for Schmidt decomposition of the quantum history state vector $|\Psi)$.
Secondly,
we have revealed that quantum entanglement is generated by the quantum
Hamiltonian~$\hat{H}$ acting on the composite quantum state~$|\Psi\rangle$
in the standard Hilbert space $\mathcal{H}$. If the quantum Hamiltonian~$\hat{H}$ is explicitly given, it can be used for the calculation
of time development operators~$\hat{\mathcal{T}}$ that replace the
time tensor product symbols~$\odot$ in~$\mathcal{\hat{Q}}_{\alpha,\beta}$ for
the formation of corresponding quantum history chain operators~$\hat{K}_{\alpha,\beta}$ in standard Hilbert space.
Due to the presence of time development operators, the chain operators~$\hat{K}_{\alpha,\beta}$ of different quantum histories are no longer guaranteed to be mutually orthogonal.
Feynman's sum-over-histories formulation is extremely helpful in situations when the time development operators are already known, because it is not necessary to reconstruct the quantum Hamiltonian in order to compute the Feynman propagators.
Thirdly,
by fixing the final postselected state of the composite system, say $|\Psi_{k+1}\rangle$, we have shown
that one computes the quantum history Feynman propagator $\psi_{\alpha,\beta}$ using the
inner product involving the corresponding chain operator $\langle\Psi_{k+1}|\hat{K}_{\alpha,\beta}|\Psi_{0}\rangle$
and then uses the information obtained from all quantum history propagators
to determine the possible entanglement of the quantum histories.

There are several contributions in our theoretical approach including the realization that
neither entangled projectors in the history Hilbert space,
nor inner products between pairs of quantum history chain operators in the standard Hilbert space
are required for quantification of entanglement of bipartite quantum histories.
Instead, we have demonstrated that the singular value decomposition of the propagator
complex coefficient matrix $\hat{C}$ contains all the information necessary for answering the question
whether a complete set of bipartite quantum histories is entangled or not.
In fact, the standard Schmidt decomposition of a bipartite state vector $|\Psi(t)\rangle$ at a
single time point $t$ could be viewed as a special (trivial) case of Schmidt
decomposition of quantum history state vector $|\Psi)$ that possesses only a single time point.

From the Schmidt coefficients obtained in the decomposition of the
propagator complex coefficient matrix $\hat{C}$, one is able to compute
a number of entanglement measures, including entanglement
entropy $\mathcal{S}$, entanglement robustness $\mathcal{R}$ and
concurrence $\mathcal{C}$. Although in Hardy's overlapping interferometers, it is relatively easy to perform the singular value
decomposition for the extraction of the Schmidt coefficients, this
task becomes very expensive computationally for large history Hilbert
spaces. Fortunately, for the quantitative evaluation of concurrence
there is a computational workaround proposed by Gudder \cite{Gudder2020b,Georgiev2022},
namely, rather than performing singular value decomposition one can
directly compute the trace of the matrix $\tilde{C}^{2}$ and plug it into~\eqref{eq:concurrence}.
This establishes the computational ease of concurrence over other
entanglement measures for witnessing entanglement of quantum histories.

Entanglement of quantum histories is a robust prediction of the standard quantum formalism, which holds a great explanatory power with regard to occurrence of classically inexplicable experimental results in quantum foundations. Previous works on ``entangled histories'' have explored the possible non-factorizability of quantum histories in time for single quantum systems and have shown how entanglement-in-time can be utilized for experimental testing and verification of \emph{quantum contextuality} \cite{Cotler2016,Cotler2017b,Dong2017,Nowakowski2018,Pan2019}. Motivated by the additional opportunities provided by quantum system composition, here we have investigated the possible entanglement of bipartite quantum histories and quantified the amount of quantum entanglement that can be utilized for experimental testing and verification of \emph{quantum nonlocality} between spacelike separated agents.

Because our approach makes an extensive use of Feynman propagators, it is well-suited for the analysis and design of optical applications comprised of e.g. optical fibers connected to quantum controllers. Straightforward description of such optical setups is aided by the repetitive use of known time development operators with substantial effective dimensional reduction of the constructed history Hilbert space. The accessibility of different Feynman propagators through measurable weak values also provides potentially useful means for calibration of engineered optical quantum devices and direct verification of their physical mechanism of action.

The prospects for future research include possible extension of the quantum history formalism to multipartite quantum systems and exploration of different generalizations of Schmidt decomposition that go beyond simple consideration of the set of all bipartitions. Another interesting research avenue would be to consider the constraints on entanglement of quantum histories imposed by the presence of indistinguishable particles.

\section*{Acknowledgments}

We would like to thank the anonymous reviewers for constructive comments. E.C. was supported by the Israeli Innovation Authority under Projects No. 70002 and No. 73795, by the Pazy Foundation, by the Israeli Ministry of Science and Technology, and by the Quantum Science and Technology Program of the Israeli Council of Higher Education.

\bibliographystyle{apsrev4-2}
\bibliography{references}

\end{document}